\documentclass{article}
\usepackage[a4paper, total={6in, 8in}]{geometry}

\usepackage{multirow}%
\usepackage{amsmath,amssymb,amsfonts}%
\usepackage{amsthm}%
\usepackage{mathrsfs}%
\usepackage[title]{appendix}%

\RequirePackage[colorlinks,citecolor=blue,urlcolor=blue]{hyperref}
\usepackage[numbers]{natbib}
\usepackage[dvipsnames]{xcolor}
\usepackage{xcolor, soul} 
\usepackage[normalem]{ulem}

\usepackage{algorithm,algpseudocode}
\usepackage{array}
\usepackage{enumitem}
\usepackage{mathabx}
\usepackage{dsfont}
\usepackage{pgf,tikz}
\usetikzlibrary{trees}
\usepackage{booktabs}
\usepackage{authblk}

\theoremstyle{thmstyleone}%
%
\theoremstyle{thmstyletwo}%
\newtheorem{remark}{Remark}%
\theoremstyle{thmstylethree}%
\newtheorem{definition}{Definition}

\newtheorem{prop}{Proposition}[section]
\newtheorem{lem}[prop]{Lemma}

\newtheorem{cor}[prop]{Corollary}
\newtheorem{theo}[prop]{Theorem}

\newcommand{\R}{\mathbb{R}}
\newcommand{\Z}{\mathbb{Z}}
\newcommand{\N}{\mathbb{N}}

\renewcommand\P{\mathbb{P}}
\newcommand{\der}{\mathrm{d}}

\newcommand{\E}{\mathbb{E}}
\newcommand{\V}{\mathbb{V}}
\renewcommand{\epsilon}{\varepsilon}
\newcommand{\1}{\mathds{1}}
\newcommand{\diam}{{\rm diam}}

\newcounter{exno}
\newcommand{\ex}[1]{\refstepcounter{exno}\label{#1}}


\title{\bf Nonparametric intensity estimation
of spatial point processes by random forests}

\author[1]{Christophe A.N. Biscio}
\author[2]{Fr\'ed\'eric Lavancier}

\affil[1]{{\small  Department of Mathematical Sciences, Aalborg University, christophe@math.aau.dk}}
\affil[2]{{\small Univ Rennes, Ensai, CNRS, CREST, frederic.lavancier@ensai.fr}}

\date{}

\begin{document}
	\maketitle

	\begin{abstract} We propose a random forest estimator for the intensity of spatial point processes, applicable with or without covariates. 
It retains the well-known advantages of a random forest approach, including the ability to handle a large number of covariates, out-of-bag cross-validation, and variable importance assessment. Importantly, even in the absence of covariates, it requires no border correction and adapts naturally to irregularly shaped domains and  manifolds. Consistency and convergence rates are established under various asymptotic regimes, revealing the benefit of using covariates when available. Numerical experiments illustrate the methodology and demonstrate that it performs competitively with state-of-the-art methods.

	\bigskip

	\noindent Keywords:  Inhomogeneous spatial point process ;
 Intensity function ;
Nonparametric intensity estimation; 
 Random forest ; Variable importance
	\end{abstract}

	\section{Introduction}

Spatial point patterns are ubiquitous in many fields, including biology, ecology, epidemiology, criminology, astronomy, and materials science. The first crucial step in the statistical analysis of such data is to estimate the intensity of points over space, which provides information on the average number of points per unit area.

Depending on the data at hand, two main strategies can be considered to estimate the intensity. The first accounts for spatial variation solely through the spatial coordinates of the observed points. The second  leverages spatial covariates, if available, that may explain the spatial fluctuations of the intensity.  
In the literature, the first approach is most often addressed by kernel smoothing \citep{diggle1985kernel, baddeley15book}, though alternative nonparametric strategies also exist \citep{Barr10,baddeley2022diffusion,Flaxman17}. For the second approach involving covariates, most contributions assume a parametric form, typically of log-linear type \citep{Schoenberg05,waagepetersen:guan:09,choiruddin2023Dantzig}. While convenient for inference, this setting may be overly restrictive in practice. Nonparametric estimation in the presence of covariates can also rely on kernel smoothing, where the distance between two spatial locations is replaced by the distance between the covariate values at these locations  \citep{guan2008,baddeley2012nonparametric}. However, due to the curse of dimensionality, this approach is suitable only for a small number of covariates. More recent nonparametric strategies that can accommodate a larger number of covariates include Bayesian models \citep{Kim22}, gradient boosting  \citep{Lu06082025}, and deep neural networks  \citep{Okawa19,zhang2023}.

In this contribution, we introduce a random forest approach for  nonparametric intensity estimation, applicable with or without covariates. 
What we call (with slight abuse) a tree in this random forest is a random partition of the spatial domain or, if covariates are available, of their image space, where the intensity in each cell is simply estimated as the number of observed points divided by the cell volume. As usual, the random forest estimator is obtained by averaging over different trees. The details, in particular on how the random partitions are constructed, are provided in Section~\ref{sec:methodology}.

\medskip

Random forest methods are popular for regression and classification due to their flexibility,  accuracy and ability to handle a large number of covariates \cite{Breiman01,hastie2009elements}. Our adaptation to the point process setting preserves these advantages and further offers the following benefits:
\begin{itemize}
\item Each tree estimator of the intensity is easily interpretable, including through visual inspection;
\item No correction is needed to account for border effects, which is  particularly convenient in the presence of irregularly shaped observation domains and domains with occlusions. By contrast, other approaches often require specific and elaborate border corrections methods \citep{diggle1985kernel,baddeley2022diffusion,sangalli2021complexdomain};
\item The estimator can be straightforwardly applied to point patterns observed on an manifold, as illustrated in Section~\ref{sec:purelyspatialpart}, without requiring edge or shape corrections, unlike kernel smoothing \citep{Cohen23};
\item The method comes with the strandard analytical tools available for random forests.  In particular, hyperparameter selection can be easily performed using out-of-bag cross-validation, as detailed in Section~\ref{sec:OOB};
\item Similarly, when covariates are involved, their importance in estimating the intensity  can be quantified.
\end{itemize}

Moreover, our method is supported with strong theoretical guarantees. We establish the consistency of the estimator and its rate of convergence under different asymptotic regimes: infill asymptotics, increasing domain asymptotics, or a combination of the two. Infill asymptotics is appropriate  when observed points are dense within a fixed domain. Increasing domain asymptotics, on the other hand, applies when points are not necessarily dense, but numerous because the observation domain is very large. 
It is well known that the consistency of intensity estimation in these regimes depends on whether the estimator leverages covariates \citep{guan2008,lieshout21Infill}. We confirm this conclusion for our random forest estimator. Moreover, we go a step further by examining, via the rate of convergence, the benefit of leveraging covariates when available, compared to an estimation based solely  on spatial coordinates. We show that, overall, leveraging covariates is generally beneficial, provided the intensity genuinely depends on them. 
Our theoretical findings, presented in Section~\ref{sec:theory}, can be summarized as follows:
\begin{itemize}
\item In an increasing domain asymptotic regime, consistency is achieved only if covariates are leveraged, provided they take the same values sufficiently often across the observation domain. 

\item In an infill asymptotic regime, leveraging covariates or not generally leads to consistent estimation, which can even achieve the minimax rate of convergence when the underlying intensity is assumed to be H\"older continuous. However, in certain situations where covariates are locally very smooth, leveraging them can lead to a strictly faster rate of convergence.

\item In an intermediate asymptotic regime, consistency generally holds whether covariates are leveraged or not, but the rate of convergence is typically faster if covariates are used, provided they are sufficiently smooth locally and take repeated values frequently enough in space. 
\end{itemize}

To conduct this theoretical study, we consider a purely random forest estimator,  where the tessellations for each tree are constructed independently of the observed point pattern. This setting coincides exactly with our method when the intensity estimator does not  involve covariates, but it consitutes a simplifying assumption when covariates are leveraged. Investigating the theoretical properties of  genuine random forests is notoriously difficult \citep{Biau08,biau12,Scornet15}, and this simplification allows for deeper theoretical insights  \citep{arlot2014,mourtada2020, oreilly2021}. From a broader perspective,  genuine random forests are believed  to generally outperform  purely random forests (see \cite{Mourtada21} for numerical illustrations), so that  theoretical results established for purely random forests can be interpreted as worst-case guarantees.

\medskip

The article is organized as follows. 
Section~\ref{sec:methodology} details the construction of the random forest intensity estimator, depending on whether covariates are available or not. Section~\ref{sec:simulation} presents numerical illustrations and an application to a real dataset. It also includes a brief comparison with state-of-the-art methods, highlighting the competitiveness of our approach. The theoretical analysis is developed in Section~\ref{sec:theory},  with proofs postponed to  Section~\ref{sec:proof}. Finally, Appendices~\ref{sec:tessellations} and~\ref{sec:pcf} provide additional material on random tessellations and point processes, respectively. The implementation of the estimator and related utilities is provided in the R package \texttt{spforest}, available on our GitHub repository \url{https://github.com/biscio/spforest}. To ensure reproducibility, all experiments presented in this paper can be accessed at  \url{https://github.com/biscio/spforest_simulation_study}.

\section{Methodology} 
\label{sec:methodology}

Let $X$ be a spatial point process on $\R^d$, $d\geq 1$. Assuming its existence, the intensity of $X$ is the function $\lambda$ satisfying, for any Borel set $B \subset \R^d$,
\begin{equation*}
\E \sum_{u\in X} \1_{u \in B} = \int_{B} \lambda(u) \der u.
\end{equation*}
Our aim is to estimate  $\lambda$ based on a single realisation of $X$ on a  bounded set $W\subset \R^d$.

We moreover assume that a $p$-dimensional covariate $z:\R^d \to \R^p$, $p\geq 1$, may be observed on $W$ and that the intensity $\lambda$ depends on $z$, that is, for all $u\in\R^d$,  $$\lambda(u)=f(z(u))$$ for some nonnegative function $f$. 
Note that the particular case $z(u)=u$, for all $u\in\R^d$, reduces to the situation where no covariate is available, and the intensity  simply depends on the spatial coordinates. This specific situation will be discussed in the first subsection below.

Based on a realisation on $W$, we shall estimate $\lambda(x)$ for any $x$ such that $z(x)\in z(W)$, which of course includes any $x\in W$, but potentially many more. Our estimator is based on partitions of  $z(W)$. The core of our method lies in how these partitions are built. We detail this construction in the two following subsections, depending on whether covariates are available or not. Given these partitions, our random forest estimator is constructed as follows. 

Let $\{I_{j}, j\in \mathcal J\}$ be a
 partition of $z(W)$, so that
$$z(W)=\bigcup_{j \in \mathcal J} I_{j},$$
and the $I_j$'s do not overlap. 
For $x$ such that $z(x)\in z(W)$, we denote by $I(x)$ the set $I_{j}$ such that $z(x)\in I_{j}$. We assume that this set is unique, which means that the partition is such that $z(x)$ does not belong to a boundary $I_{j_1}\cap I_{j_2}$ for $j_1\neq j_2$. We
further let $A_{j}=z^{-1}(I_{j})\cap W$ be the inverse image of
$I_{j}$ in $W$ and we denote $$A(x)=z^{-1}(I(x))\cap W.$$ Note that
$\{A_{j}, j\in \mathcal J\}$ forms a partition of $W$.

We call  {\it tree} intensity estimator of $\lambda(x)$, based on the above
partition, the estimator 
\begin{equation}\label{e:tree estimator without n}
\hat\lambda^{(1)}(x) = \frac{1}{|A(x)|}\sum_{u\in X} \1_{u\in A(x)}.
\end{equation}
This estimator is piecewise constant, similar to a histogram  with bins corresponding to the cells $A_j$. Consider now a collection of $M$ tree  intensity estimators
$\hat\lambda^{(1)}(x),\dots,\hat\lambda^{(M)}(x)$, each based on
a different partition of $z(W)$. The random forest intensity
estimator of $\lambda(x)$ is 
\begin{equation}\label{e:rf estimator without n}
\hat\lambda^{(RF)}(x) = \frac{1}{M} \sum_{i=1}^M \hat\lambda^{(i)}(x).
\end{equation}

Section~\ref{sec:purelyspatialpart} discusses the procedure for constructing the above partitions in the absence of covariates (i.e., when $z(u)=u$ for all $u\in\R^d$), while Section~\ref{sec:covariatespart} explains how to leverage  covariates when they are available.

\subsection{Purely spatial partitions}
\label{sec:purelyspatialpart}

We assume in this section that  $z(u)=u$ for all $u\in\R^d$, which means that no covariate is available and the estimation of $\lambda(x)$  relies solely on the spatial coordinates. 

\medskip

In this case, $z(W)=W$ and the cells $I_j$ and $A_j$ introduced above coincide. To construct different partitions of $W$ in this setting, we propose to generate independent stationary random tessellations of $W$.
A simple and standard example  is the Poisson Vorono\"i tessellation. 
Let $\{u_i\}_{i\in\N}$ denote the realisation of a homogeneous Poisson point process on $\R^d$, independent of $X$,  with intensity $\gamma>0$, whose choice is discussed below.
The Vorono\"i  cell $V_i$ is the set of all points of $\R^d$ closer to $u_i$ than any other event $u_j$, $i\neq j$. The set of all Vorono\"i cells forms a partition of $\R^d$, i.e., $\R^d=\bigcup_{i\in \N} V_i$, called Poisson Vorono\"i tessellation, see for instance~\cite{chiu2013} for more details. Other standard Poisson-based tessellations, depending on a unique intensity parameter $\gamma$, can be similarly considered, as the Poisson Delaunay,  Poisson hyperplane and
STIT tessellations, see \cite{schneider2008,chiu2013} and Appendix~\ref{sec:tessellations}. 
Given such a tessellation of $\R^d$ with cells $V_i$,  
we obtain the partition  $W=\bigcup_{i\in \N} V_i\cap W$. 
From a practical point of view, it is enough to generate the tessellation on a rectangular window containing $W$, and consider the intersection with $W$.   However, in some situations, especially if $W$ is a disconnected set, it may happen that for some $i$, $V_i\cap W$ is composed of disjoint subcells. In this case, we consider these subcells as different cells of the partition. 
We finally obtain the partition $W=\bigcup_{j\in \mathcal J} A_j$, where $A_j$ either corresponds to a cell $V_i\cap W$ (if it is a connected non-empty cell) or to a subcell of it.

We can thus construct as many independent tessellations of $W$ as we wish, by simply generating independent realisations of the ancillary homogeneous Poisson point process. The tree intensity estimator \eqref{e:tree estimator without n} is deduced for each of them, leading to  the final random forest estimator \eqref{e:rf estimator without n}.


\begin{figure}
\begin{tabular}{m{4cm}m{4.9cm}m{4.5cm}} \includegraphics[width=0.29\textwidth]{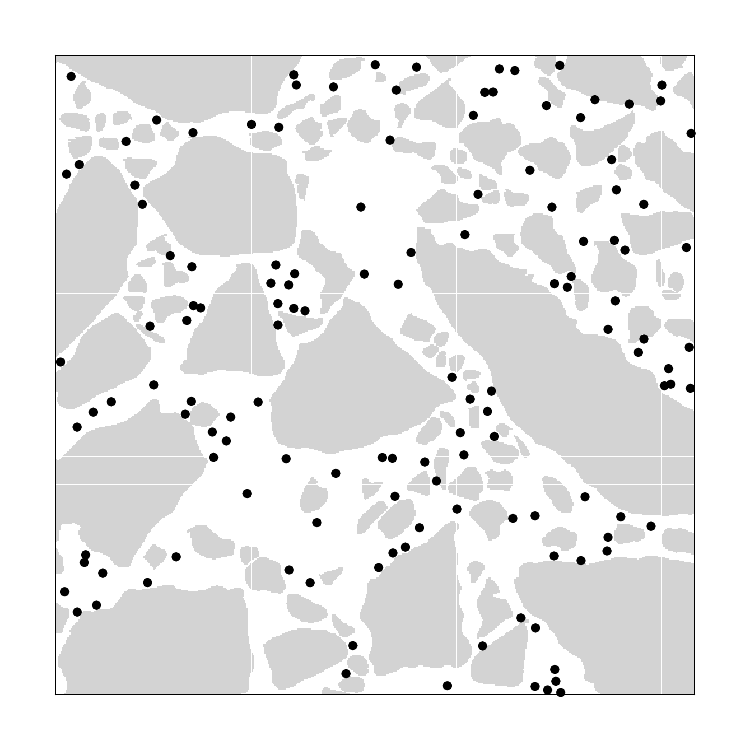} & \includegraphics[width=0.34\textwidth]{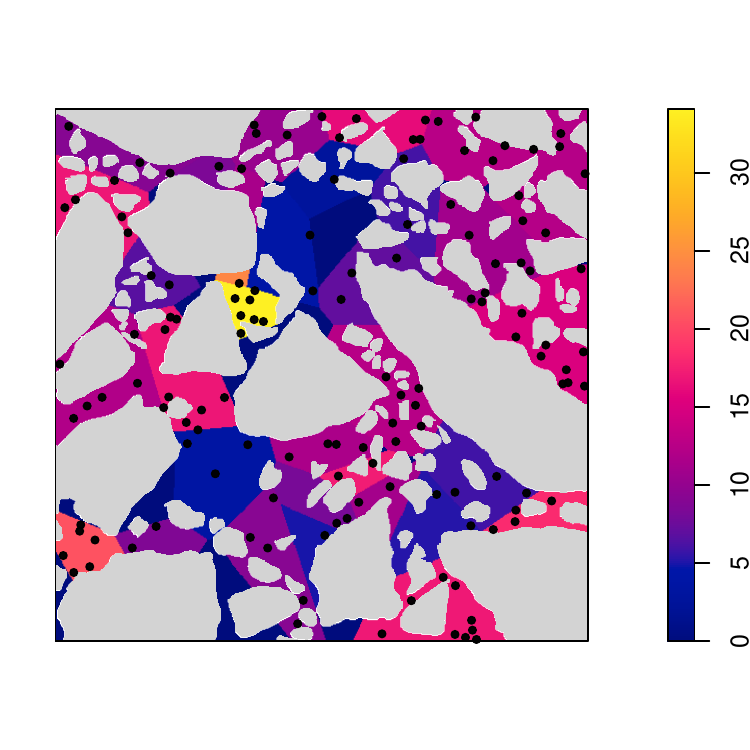}  &
 \includegraphics[width=0.34\textwidth]{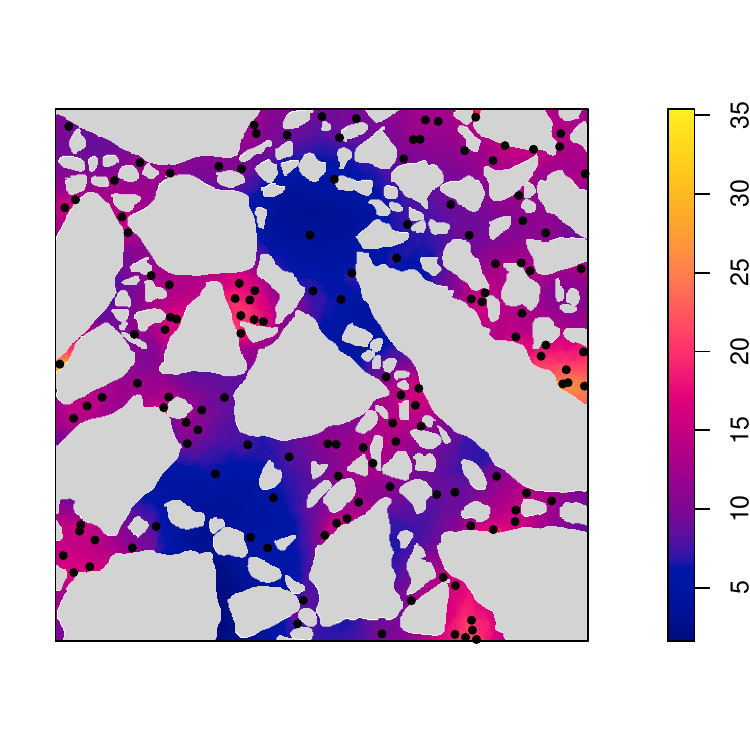} \end{tabular}
   \caption{Left : locations of 136 centroids of air bubbles in a 5.6 mm$^2$ cross-section of a concrete body. The grey zones correspond to aggregate particles and the white zone is the cement paste matrix where the bubbles are located. Middle : a random tessellation with approximately 40 cells coloured by the intensity of points in each cell. Right : random forest intensity estimation based on 100 independent tessellations.}
          \label{fig:concrete}
\end{figure}

\medskip

As an illustration, Figure~\ref{fig:concrete} shows the result of the intensity estimation of air bubbles in a 5.6 mm$^2$ cross-section of a concrete body. This dataset was studied in \cite{Natesaiyer92, Igarashi} and  is available in the \texttt{R} package \texttt{spatstat} under the name \texttt{concrete}. The centroids of the air bubbles form the point pattern shown at the left panel of the figure. They are located in the cement paste matrix surrounding the grey aggregate particles. In the middle of the figure, a tree intensity estimation based on a random Vorono\"i tessellation is  displayed, while the right-hand plot shows the result of the random forest estimator averaged over 100 random tessellations. The scale of the intensity is the number of points per mm$^2$. Note that by construction, the tessellations adapt to the geometry of the region and no border correction is needed for the intensity estimation, unlike for standard kernel estimators.

Interestingly, our methodology naturally extends  to point patterns on a manifold, as long as  random tessellations can be generated---a straightforward task once the manifold is represented as a fine mesh. We illustrate this approach in Figure~\ref{fig:bei_mesh}, showing the log-intensity of Beilschmiedia pendula
        Lauraceae trees in a $1000m \times 500m$ region in  Barro Colorado Island, accounting for the geographical topography of the region. This dataset, originally studied in \cite{condit1998tropical,hubbell1999light}, is also available in the \texttt{R} package \texttt{spatstat} and has been extensively analysed in several articles. In particular, a kernel-based intensity estimation accounting for the topography is  investigated in \cite{Cohen23}: it requires to compute geodesic distances on the manifold in addition to edge and shape corrections. In comparison, our method is straightforward and boils down to counting the number of points in the cells of each generated tessellation of the manifold.

\begin{figure}[h]
\begin{tabular}{cc}            
 \includegraphics[width=0.5\textwidth]{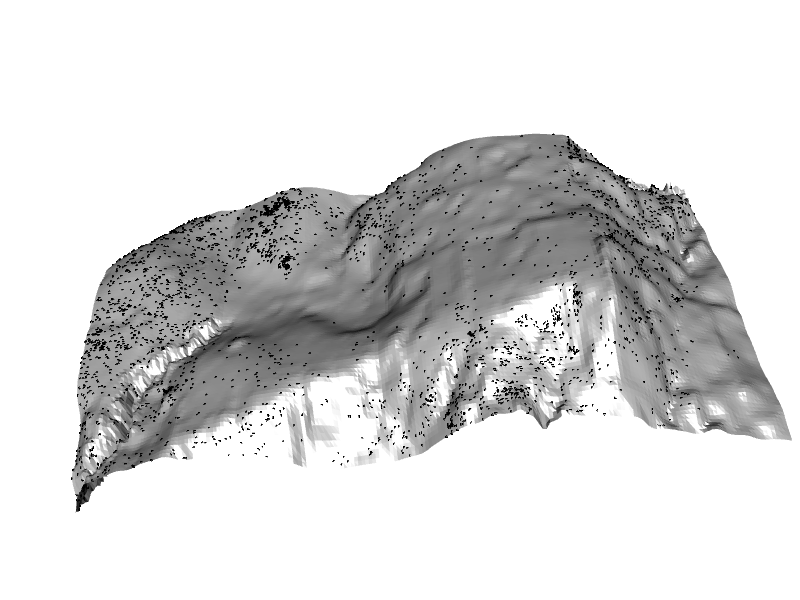} &
 \includegraphics[width=0.5\textwidth]{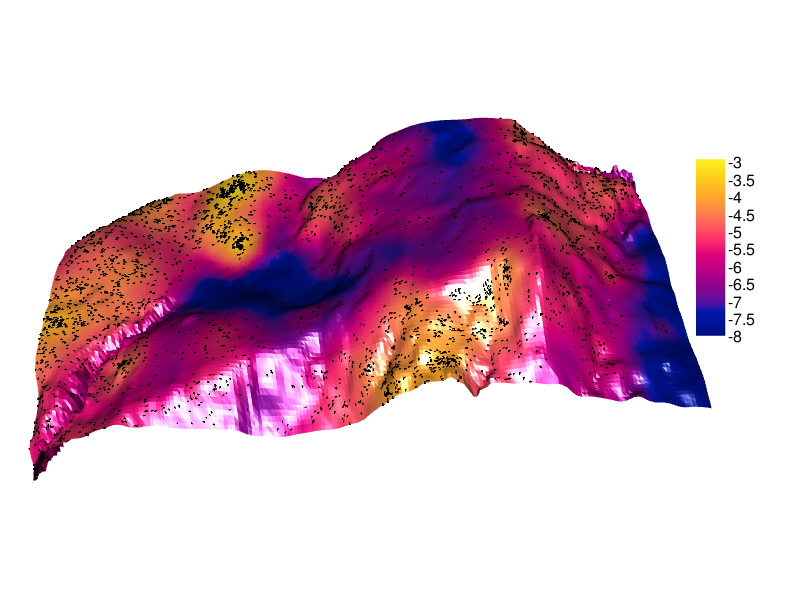}\\[-6ex]
  \includegraphics[width=0.5\textwidth]{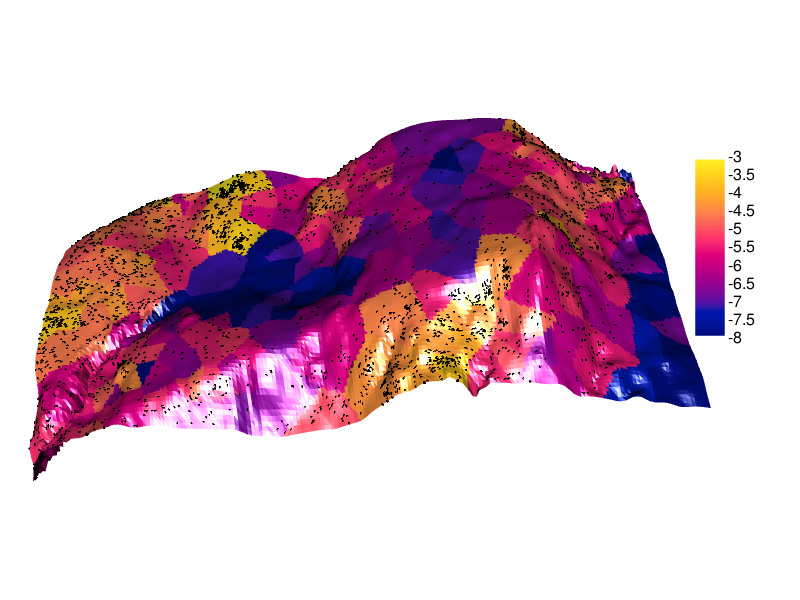} &
   \includegraphics[width=0.5\textwidth]{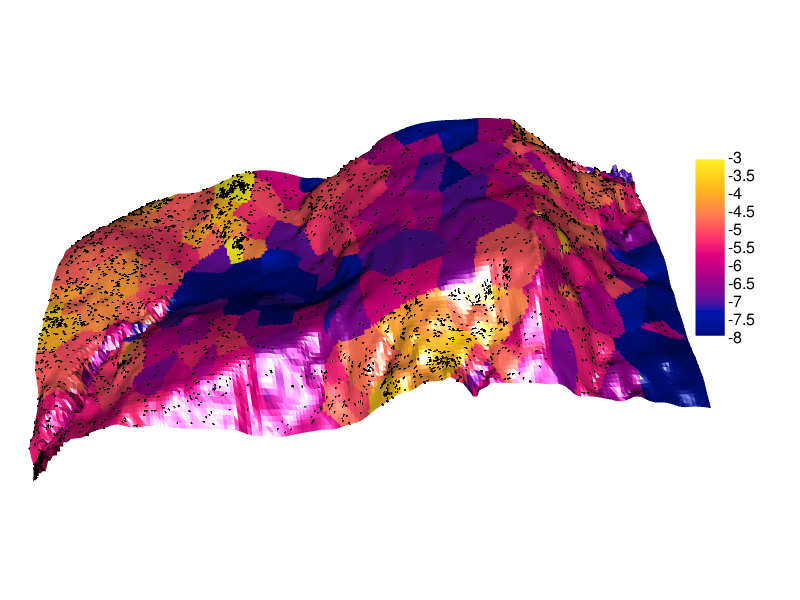}  
 \end{tabular}
   \caption{Top left: locations of $3604$ Beilschmiedia pendula
        Lauraceae tree stems observed in a $1000m \times 500m$ region in  Barro Colorado Island, with altitude between 719m and 957m. Top right: logarithm of the random forest intensity estimation based on 100 independent tessellations of the surface, each with 235 cells in average. Bottom: two random tessellations coloured by the log-intensity of points in each cell.}
          \label{fig:bei_mesh}
\end{figure}

%

\medskip

A crucial parameter in the above construction is the intensity $\gamma$ of the ancillary Poisson point process, from which each tessellation is generated. It  represents the mean number
 of cells per unit measure, so that the number of cells in a region $W$ is in average $\gamma |W|$. This  parameter plays the same role as the bandwidth in kernel estimation, though in an opposite manner: the smaller the value of $\gamma$, the smoother yet more biased the random forest intensity estimator. In our theoretical study in Section~\ref{sec: consistency}, we obtain optimal rates of $\gamma$ for the consistency of the random forest estimator, depending on  the mean number of observed points and on the size of the observation window $W$. In practice, an out-of-bag cross-validation procedure is feasible to choose $\gamma$, as presented in see Section~\ref{sec:OOB}. However, as a simple alternative, we propose the following rule of thumb, adapted from the Freedman-Diaconis choice of bins' widths for a histogram. Remember that the latter is $\ell=2IQR/\sqrt[3]{N}$ where $IQR$ is the interquartile range of the univariate data at hand, and $N$ is its cardinality. If we view each cell of the tessellation as $d$-dimensional bin with approximated volume $\ell^d$, we need in average $|W|/\ell^d$ cells to cover $W$, which in other words corresponds to an intensity $\gamma=\ell^{-d}$. 
Moreover, since our point pattern is $d$-dimensional, we consider for the interquartile value the mean interquartile range of the point coordinates, denoted by $\widebar{IQR}$. We then obtain as a rule of thumb:
 \begin{equation}\label{choice gamma}
 \gamma=\frac{|X|^{d/3}}{2^d\widebar{IQR}^d},\end{equation}
where $|X|$ is the cardinality of $X$. If the point pattern is well spread inside $W$, we can expect that $\widebar{IQR}\approx |W|^{1/d}/2$, as for a uniform distribution, giving the simpler rule: $\gamma=|X|^{d/3}/|W|$. This choice leads to approximately $|X|^{d/3}$ cells in each tessellation. 
From our experience, it has proven to be a good rule of thumb. We used it for the results displayed in Figures~\ref{fig:concrete} and~\ref{fig:bei_mesh}.

\subsection{Partitions based on covariates}
\label{sec:covariatespart}

As illustrated  in Section~\ref{sec:simulation}
 and proved in
Section~\ref{sec:benefits covariates}, leveraging covariates, when available, generally improves
the rate of convergence of the intensity estimator, compared to 
 the purely spatial case where no covariates are taken into account. 
  When some covariates $z=(z_1,\dots,z_p)$ are available, we construct the random forest intensity estimator by following the same basic steps as for a random forest regressor, see~\cite{hastie2009elements}. Algorithm~\ref{alg:random forest intensity} summarises the procedure. It consists, for  each intensity tree estimator, in  partitioning $W$ through a recursive partition of $z(W)$.
  
  \begin{algorithm}
        \caption{Random forest intensity estimator based on
        covariates}\label{alg:random forest intensity}
        \begin{algorithmic}[1]
          \Require Point pattern $X$ observed on $W$;
          covariates $z$; parameters $M, mtry, n_{min}$.
          
          \For{$i=1,\ldots, M$} \label{alg:line:forloopbegin}
      
            \State Draw a bootstrap version $X_b$ of
            $X$, with replacement;\label{alg:line:bootstrap}
      
            \State Compute a partition $\{A_{j}, j\in \mathcal
            J\}$ of $W$ as follows;

            \State Let $W$ be the unique cell in the initial partition;

	   \For{each cell containing more than $n_{min}$ points from $X_b$}

            \State Pick $mtry$
            covariates at random;

                \For{each picked covariate $z_k$}
                               \State Compute $\bar{z}_k$, the median of $z_k$ in  the cell;

                \State Deduce the sub- and super-level sets of $z_k$ in the cell, w.r.t $\bar{z}_k$;

                \State Compute the splitting score of the cell for $z_k$ as in Equation~\eqref{e:score
                split};\label{alg:line:splitchoice}

                \EndFor
                
                  \State Select the covariate leading to the maximal splitting score;

               \State Split the cell into the sub- and super-level sets of the covariate; 

		\State Update the partition with this split;

 \EndFor
                     
            \State Compute the tree intensity estimate
            $\hat\lambda^{(i)}$ as in Equation~\eqref{e:tree
            estimator without n};
            
            \EndFor
         
            \State \textbf{Output:} The random forest intensity estimator:
            $\frac{1}{M} \sum_{i=1}^M \hat\lambda^{(i)}$. \label{alg:line:output}
        
                \end{algorithmic}
        
\end{algorithm}

\medskip

  Before detailing the construction of each tree, let us first outline the two standard strategies we use to introduce diversity among them---a key ingredient in the performance of random forests. 
  First, each tree is built from a bootstrap sample $X_b$ of $X$, obtained by drawing $n$ points from $X$ with replacement, where $n$ is the cardinality of $X$. This step is discussed in Remark~\ref{rem comments} below. Second, at each node, the splitting rule is based on a randomly selected subset of $mtry$ covariates out of the $p$ available, where $mtry$ is a tuning parameter.  
   
   \medskip

 Now, in  growing a tree based on $X_b$, the key specificity of our point process setting lies in how a cell of the partition is (or is not) split. To this end, we first compute, for each covariate $z_k$ among the $mtry$ selected covariates, its median value $\bar z_k$. Then we consider the sub- and super-level sets of $z_k$ in the cell with respect to $\bar z_k$,  denoted by $L_{z_k}^{-}$ and $L_{z_k}^{+}$, respectively. For instance, if $A_j$ denotes the candidate cell to be split, $L_{z_k}^{-}=\{u\in A_j, z_k(u)\leq \bar z_k\}$. Then we compute the following splitting score
   \begin{equation}\label{e:score split}
       s(z_k)= n_-\log\left( \frac{n_--1}{|L_{z_k}^{-}|} \right)  \1_{n_->1} 
        + n_+ \log\left( \frac{n_+-1}{|L_{z_k}^{+}|} \right) \1_{n_+>1},
\end{equation}
where $n_- = |X_b \cap L_{z_k}^{-}|$ and  $n_+ = |X_b \cap
L_{z_k}^{+}|$. This specific form is related to the variation in the leave-one-out Poisson log-likelihood caused by the split, and is further justified in Remark~\ref{rem comments} below. Given this, the tree construction is straightforward: we split each cell according to the sub- and super-level sets of the covariate associated to the highest score.  The procedure is repeated for all cells containing more than a predetermined number of points, denoted by $n_{min}$. When all cells contain less than $n_{min}$ points, the construction of the tree is complete.

   \medskip

The above procedure leads to a partition $\{A_{j}, j\in \mathcal
            J\}$ of $W$ and then to the intensity tree estimator~\eqref{e:tree
            estimator without n}.  The random forest intensity estimator \eqref{e:rf estimator without n} is finally obtained by generating $M$ tree estimators, each based on an independent realisation of $X_b$. 
            
               \medskip

                Algorithm~\ref{alg:random forest intensity} relies on three tuning parameters that are the number of trees $M$, the number of picked covariates $mtry$ and the minimal size of each cell $n_{min}$. As for standard random forests, their choice can be carried out by an out-of-bag cross-validation procedure, as detailed in Section~\ref{sec:OOB}. 
                
                Another by-product of the random forest approach is that we can measure variable importances (vip).  A natural approach in the context of random forests is to measure the improvement in the splitting score at each split due to the variable~\cite{hastie2009elements}. In our setting this becomes the gain in the leave-one-out Poisson log-likelihood score $LCV$, defined in equation~\eqref{LCV2} below. Specifically, if the cell $A_j$ in a given tree has been split by $z_k$, this gain  is 
                  $$ vip(z_k|A_j)=  s(z_k) -  n_j\log\left( \frac{n_j-1}{|A_j|} \right)  \1_{n_j>1},$$
        where $n_j= |X_b \cap A_j|$ and $s(z_k)$ is given by \eqref{e:score split}.
        The vip of $z_k$ for a tree is the sum of these gradients over all splits due to $z_k$ in this tree. The total vip of $z_k$ in the random forest is then simply the average over the $M$ vip due to each tree.

  \begin{remark}\label{rem comments}
  The previous construction calls for two comments. The first one concerns the bootstrap sample $X_b$.  A reader familiar with point processes might be unsettled by the fact that $X_b$ contains multiple points, due to the replacement step. While it would cause a problem if we were interested in studying the cross-dependencies between the points of $X$, it is not a concern as long as we focus solely on the intensity. Indeed, the expected number of points of $X_b$ in any subregion is equal to that of $X$. 
  
            The second comment is about the choice of the splitting score \eqref{e:score split}. A standard procedure to conduct parametric estimation of the intensity of a point process is by maximising the Poisson likelihood. This approach is not only consistent for genuine Poisson point processes, but it also makes sense for a much wider class of point process models, in which case it becomes a composite likelihood approach, see  \cite{Guan06}. For intensity kernel estimation, the Poisson likelihood is also used as a cross-validation score to choose the bandwidth, see~\cite{baddeley15book}.  In this case the leave-one-out version of the Poisson log-likelihood is employed. It is defined by 
            \begin{equation}\label{LCV} 
            LCV=\sum_{x\in X} \log \hat \lambda_{-x}(x) - \int_W \hat\lambda(u)\der u,\end{equation}
            where  $\hat \lambda_{-x}(x)$ denotes the estimation of $\lambda(x)$ without using the event $x$. Following this idea, we use this score to quantify the relevance of a split in our tree construction. For a partition $\{A_{j}, j\in \mathcal  J\}$ and the associated estimator  \eqref{e:tree
            estimator without n}, $LCV$ reads 
            \begin{equation}\label{LCV2} LCV=\sum_{j\in \mathcal  J} \left(n_j \log \frac{n_j-1}{|A_j|}  \1_{n_j>1} - n_j\right),\end{equation}
where $n_j$ denotes the number of events in $A_j$. 
The score \eqref{e:score split} then corresponds to the contribution of the split of $A_j$ due to $z_k$ to the total $LCV$. Maximising  \eqref{e:score split} over all covariates amounts to maximise the increase in $LCV$ in  the split of $A_j$.
\end{remark}

\subsection{Out-of-bag cross-validation}
\label{sec:OOB}

When fitting a random forest as in Algorithm~\ref{alg:random forest intensity} we need to specify the three tuning parameters $mtry$, $n_{min}$ and $M$.  In the purely
spatial case of Section~\ref{sec:purelyspatialpart}, we 
need to specify the intensity $\gamma$ of the ancillary Poisson point process used to build the tessellations. This section shows how we can adapt  the out-of-bag (OOB) cross-validation procedure to our setting in order to conduct these choices.  

OOB cross-validation is a standard approach for  random forests, see~\cite{hastie2009elements}. For each tree, the OOB sample is $X\setminus X_b$. Following our choice for the splitting score \eqref{e:score split}, motivated in  Remark~\ref{rem comments}, we can assess the quality of estimation of a tree on the OOB sample,  through the Poisson log-likelihood score. Accordingly,  if  $\hat\lambda^{(i)}$ denotes the tree intensity estimator  and $X_b^{(i)}$ is the bootstrap sample for this tree, we call OOB score in our setting the quantity 
\begin{equation}\label{e:OOB_score}
        OOB_{i} = \sum_{x \in X \setminus X_b^{(i)}} 
                \log(\hat\lambda^{(i)}(x)).
\end{equation}
Note that this is the score given by \eqref{LCV}, except that the leave-one-out step is not necessary
here and  the integral term, which  equals $|X_b|=|X|$ and does not depend on the hyperparameters, has been removed.
The OOB score of the random forest estimator given by \eqref{e:rf estimator without n} is
$$OOB=\frac 1 M \sum_{i=1}^M  OOB_{i}.$$
The tuning parameters are then chosen by minimising this OOB score. 

 In the setting of Section~\ref{sec:purelyspatialpart} dealing with the purely spatial case, we can employ this procedure to choose $\gamma$, as long as each tree is based on a bootstrap sample $X_b$ of $X$. However, as presented in Section~\ref{sec:purelyspatialpart}, the choice of $\gamma$ can also rely on the simple rule of thumb given by \eqref{choice gamma}, which does not require any bootstrap step. 
 In the setting of Section~\ref{sec:covariatespart}, the OOB cross-validation procedure straightforwardly applies for the choice of $mtry$, $n_{min}$ and $M$. We show in our simulation study of Section~\ref{sec:simulation} that this provides a good choice, in accordance with the optimal (but unknown) oracle choice based on the minimal mean integrated square error of the intensity estimator.

\section{Numerical illustrations}
\label{sec:simulation}

To illustrate our methodology in the presence of covariates, we start from the Bei dataset of Figure~\ref{fig:bei_mesh}, which records the locations of trees in a $1000m \times 500m$ region. This dataset includes in fact 15 covariates describing topological and soil
composition attributes, namely 
\begin{equation}
        \label{e:sim list covariates}
        elev, grad, Al, B, Ca, Cu, Fe, K, Mg, Mn, P, Zn, N, N_{min}, pH.
\end{equation}
In Figure~\ref{fig:bei_mesh}, the covariates were not used to estimate the  intensity, except for elevation ($elev$) to account for topography. In Section~\ref{bei} below, we present results obtained when incorporating all covariates. Before that, in Section~\ref{synthetic}, we evaluate the performance of our method on a synthetic dataset generated from the Bei covariates, over the same region.  We conclude in Section~\ref{xgboost} with a brief comparison to the state-of-the art method for intensity estimation in the presence of covariates. 

\subsection{Synthetic Datasets}\label{synthetic}

We consider in this section a synthetic point pattern generated as a Poisson point process on
$W=1000m \times 500m$ with intensity:
\begin{equation}\label{e:sim2 true intensity}
\lambda(x)=c\, \exp\left(0.5 \, \psi(Mn) (x) + 1.2 \,  \tilde Zn(x) + 0.8 \, \tilde Fe(x)\right).
\end{equation}
Here $\tilde Zn(x)$ and $\tilde Fe(x)$ denote the covariates $Zn$ and $Fe$ normalized to $[0,1]$ at location $x\in W$, and $\psi(Mn) (x)=1 + \sin(20+Mn(x)/ 100)$ is a non-linear transformation of $Mn$. The normalising constant $c$ was chosen so as to yield approximately 1000 points in $W$, which corresponds to $c\approx 6.19\times 10^{-4}$. 
This intensity is shown in the top left plot of Figure~\ref{fig:non linear image}. 
The estimation of $\lambda$ for this synthetic dataset is based on the $p=15$ covariates listed in \eqref{e:sim list covariates}, even if only three of them are relevant. 

In Figure~\ref{fig:non linear oob}, we first assess the appropriateness of the OOB cross-validation procedure for selecting  the hyperparameters $mtry$ and $n_{min}$ of the random forest estimator, as described in Section~\ref{sec:OOB}. For this illustration, the number of trees is fixed at $M=300$. The left-hand plot displays  the OOB score for different values of $mtry$ (shown as separate curves) as a function of $n_{min}$, averaged over 100 replications. The right-hand plot shows the same representation for the negative of the MISE, which is of course unknown in practice and can be viewed as an oracle score for hyperparameter selection. The consistency between the two plots demonstrates the suitability of using the OOB score in practice.

\begin{figure}
        \begin{center}
  \includegraphics[width=0.9\textwidth]{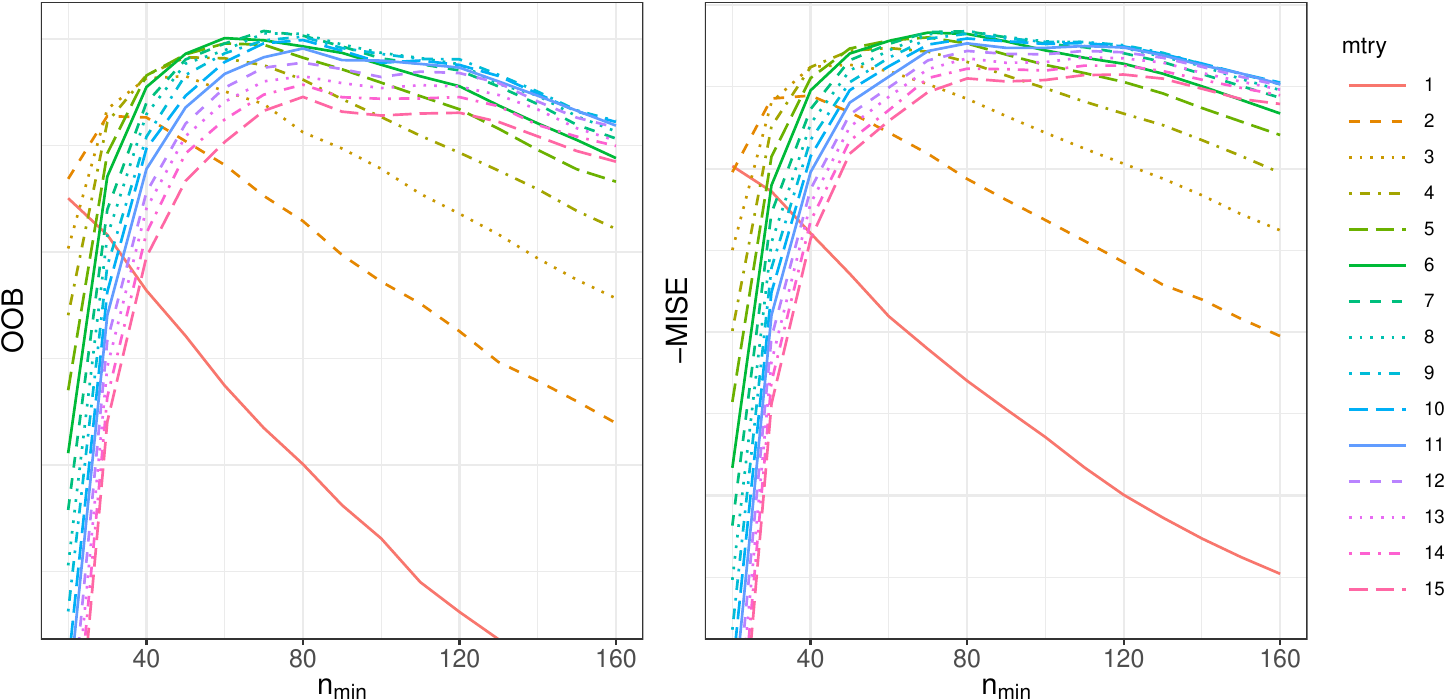}
    \end{center}
        \caption{Left: OOB score for different values of $mtry$ as a function of $n_{min}$, averaged over 100 realisations of the model in Section~\ref{synthetic}. Right: same representation for the negative of the MISE.}
        \label{fig:non linear oob}
\end{figure}

   \medskip

Figure~\ref{fig:non linear image} displays four plots obtained from one realisation of the above model: the true intensity \eqref{e:sim2 true intensity} (top-left); the estimation based on the random forest estimator using all 15 covariates listed in \eqref{e:sim list covariates},  with hyperparameters selected by OOB cross-validation (top-right); the estimation obtained without covariates, using only the spatial coordinates and following the procedure of Section~\ref{sec:purelyspatialpart} (bottom-left); and the parametric estimation under the misspecified log-linear model, as implemented by the function \texttt{ppm} of the R package \texttt{spatstat} (bottom-right).
These plots show that the random forest estimator with covariates captures well the behaviour of the true intensity, while the purely spatial estimator is less accurate. This illustrates the benefit of using covariates when available, as further confirmed in Section~\ref{sec:benefits covariates}. In addition, the misspecified log-linear parametric model appears inappropriate. These visual impressions are supported by numerical results based on the MISE over 100 replications, which are $0.108$ for the random forest estimator, $0.174$ for the purely spatial estimator  and $0.220$ for the log-linear model.

 \medskip

Finally, Figure~\ref{fig:non linear vip} reports the importance (VIP) of each covariate over 100 replications, when using the random forest estimator with all covariates. It clearly identifies the three relevant covariates used in the model, namely $Fe$, $Mn$ and $Zn$.
In contrast, for the same simulations, the misspecified log-linear model detects the significance of $Zn$ and $Fe$ in most cases (88 and 100 out of 100 replications, respectively, by a Wald test at the 5\% level), which is expected since these covariates appear log-linearly in $\lambda$. But it generally fails to detect $M_n$ (only 16 times), while $Al$ and $Cu$, which are not relevant in the model, are found significant in $54\%$ and $56\%$ of the cases.

\begin{figure}[h]
        \begin{center}
                \includegraphics[width=0.49\textwidth]{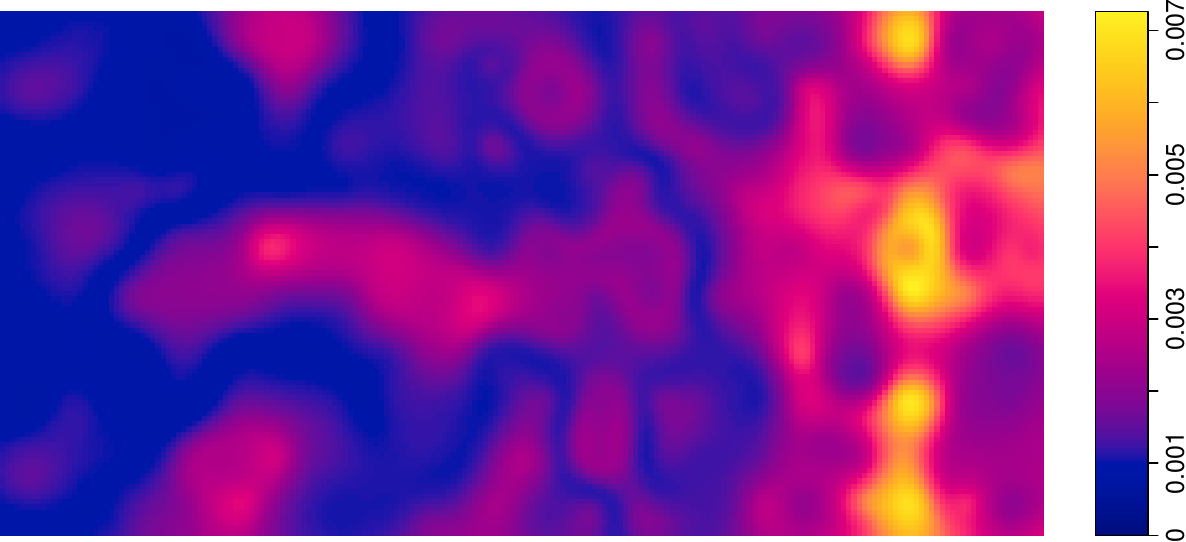}  
                \includegraphics[width=0.49\textwidth]{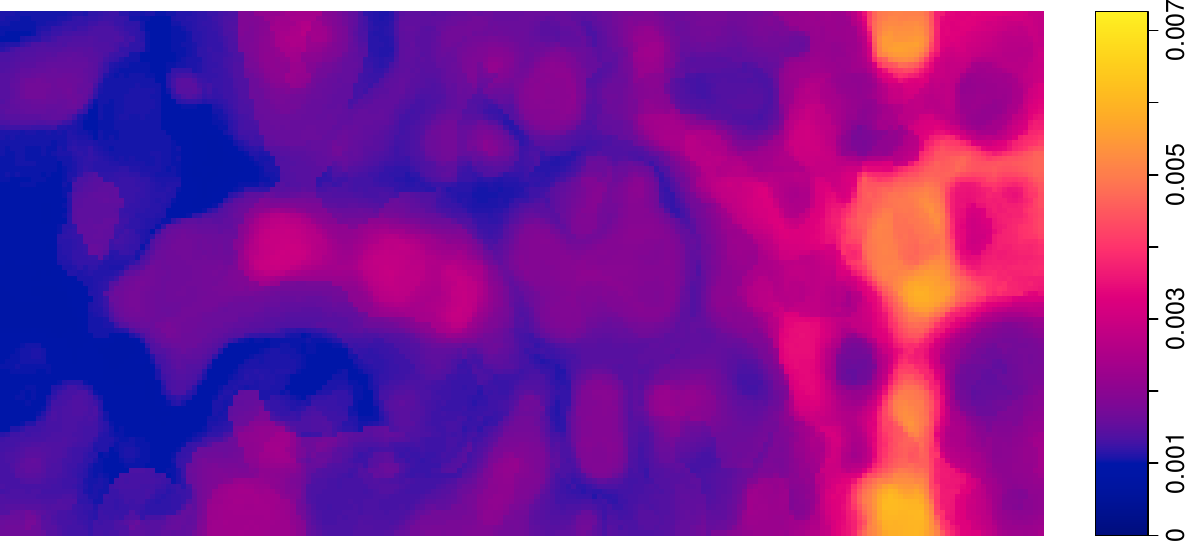} \\ 
                \includegraphics[width=0.49\textwidth]{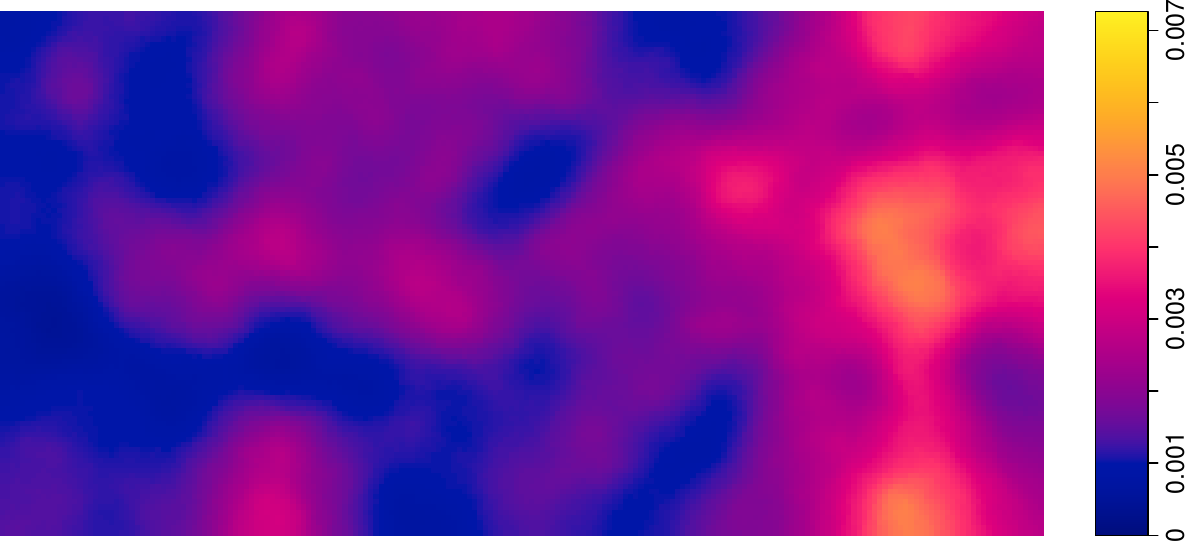}
                \includegraphics[width=0.49\textwidth]{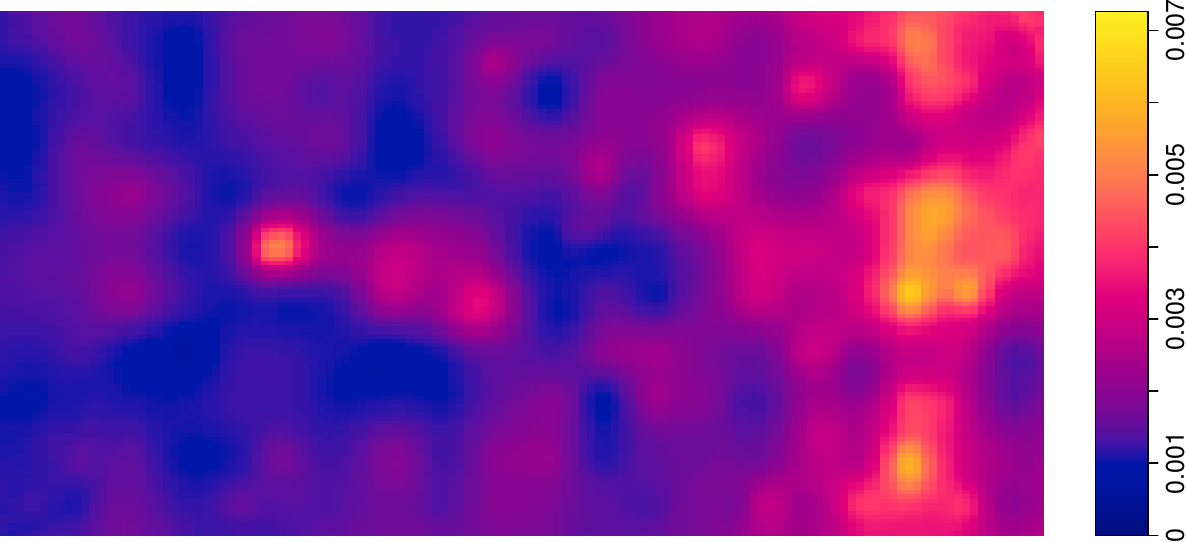}  
        \end{center}
        \caption{Top left: true intensity as defined
        by~\eqref{e:sim2 true intensity}. Top right: 
        intensity estimation by random forest using the $p=15$ available covariates, based on one realisation. Bottom left: estimation without using the covariates, as in Section~\ref{sec:purelyspatialpart}.   Bottom right:  estimation using a (misspecified) parametric log-linear model. }
        \label{fig:non linear image}
\end{figure}

\begin{figure}[H]
        \begin{center}
                \includegraphics[width=0.9\textwidth]{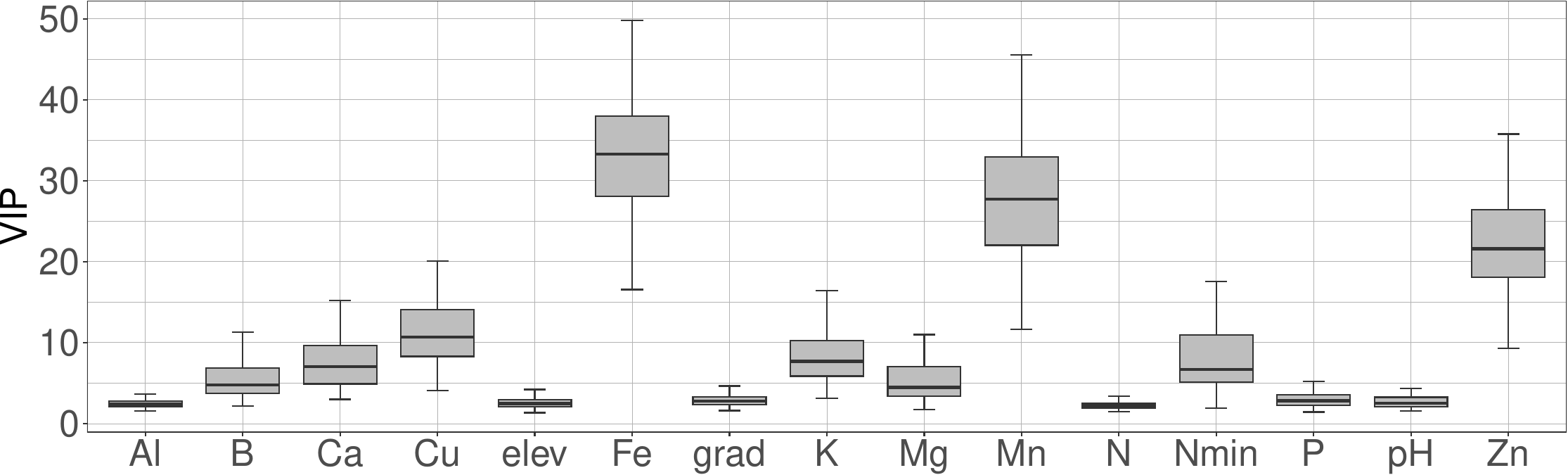} 
        \end{center}
        \caption{Boxplot of the VIP of each covariate over 100 replications of the synthetic model described in  Section~\ref{synthetic}.}
        \label{fig:non linear vip}
\end{figure}

\subsection{Application to the Bei dataset}\label{bei}
We apply our methodology to the Bei dataset, leveraging the 15 covariates listed in~	\eqref{e:sim list covariates}. 
The OOB cross-validation procedure resulted in the choice of hyperparameters $mtry=15, n_{min} = 10$ and $M=500$. The estimated intensity is shown in Figure~\ref{fig:bei PP}.

The variable importance of each covariate is displayed in Figure~\ref{fig:bei vip}, showing that $grad$, $P$, $elev$, $Cu$, and $pH$ are the five most important covariates in our estimation. It is interesting to compare this finding with similar studies on the Bei dataset, such as \cite{waagepeterne:07}, \cite{waagepetersen:guan:09}, and \cite{choiruddin18}, where a log-linear parametric model was fitted, and \cite{Lu06082025}, where a nonparametric gradient boosting method was used. In the latter, the most important covariates are the same as in our study, although in a different order. 
 In contrast, under the log-linear assumption, the most significant covariates in \cite{choiruddin18} are found to be $P$, $grad$, $elev$, $Zn$, and $Mn$. The absence of $Cu$ might be due to misspecification of the log-linear assumption. However, caution is needed, since correlations between covariates can be high (for instance, the correlation between $Mn$ and $Cu$ is $0.77$), which can hamper the identification of the most important variables.

\begin{figure}[h]
        \begin{center}
                \includegraphics[width=\textwidth]{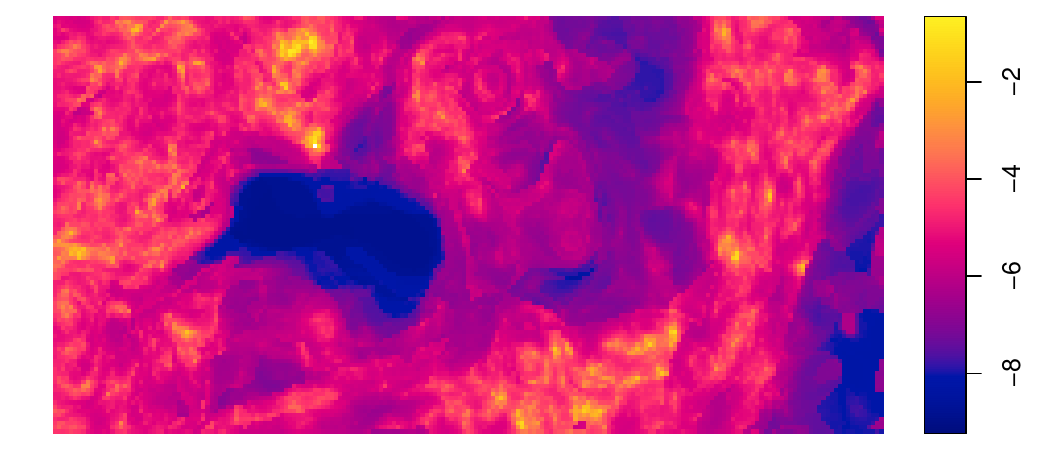} 
        \end{center}
        \caption{Estimated log-Intensity of the Bei
        trees by random forest.}
        \label{fig:bei PP}
\end{figure}

\begin{figure}[h]
        \begin{center}
                \includegraphics[width=\textwidth]{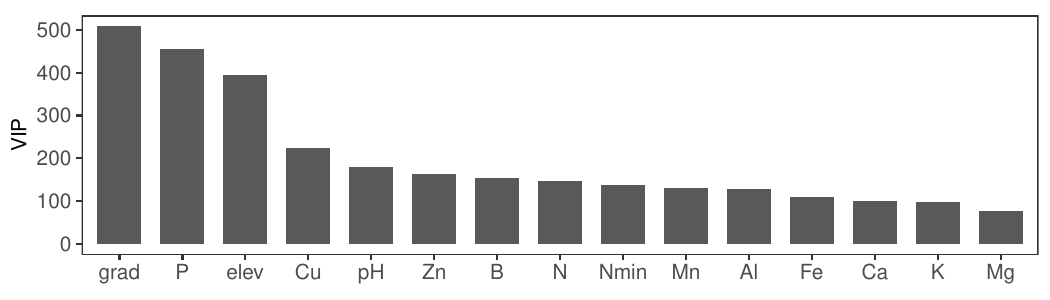} 
        \end{center}
        \caption{VIP of each covariate contributing to the estimated intensity of the Bei dataset of Figure~\ref{fig:bei PP}.}
        \label{fig:bei vip}
\end{figure}

\subsection{Comparison with state-of-the-art methods}\label{xgboost}

In the recent contribution \cite{Lu06082025}, an intensive simulation study is performed to compare competitive state-of-the art methods for nonparametric intensity estimation in the presence of covariates, under different scenarios. These methods are kernel smoothing \citep{guan2008,baddeley2012nonparametric}, the Bayesian approach of \citep{Kim22}, the deep neural network  of  \cite{zhang2023} and the gradient boosting method of \cite{Lu06082025}. In the considered scenarios, XGBoost, the method of \cite{Lu06082025}, overall provides the best results. 

Our goal is not to replicate this entire simulation study nor to introduce new scenarios. Instead, we build on the code provided in the GitHub's repository of the authors of  \cite{Lu06082025}, using the same random seed, to compare our  random forest method to XGBoost in the scenarios presented in Table~1  of  \cite{Lu06082025}. These scenarios concern the  estimation of the intensity of a Poisson point process that depends on: (1) two covariates, with estimation based on the two true covariates; (2) six covariates, with  estimation based on ten covariates (including four nuisance variables); and (3) two covariates, with estimation based on 10 covariates (including eight nuisance variables). In all cases, the mean number of points is 400 and two different set of parameters $\beta$ are considered. The results, based on 500 replications and reported  in Table~\ref{tab1}, show that our approach is competitive with state-of-the-art methods under these scenarios.

\begin{table}
\centering
\resizebox{\textwidth}{!}{
\begin{tabular}{l|cc|cc|cc}
\hline
Nb of covariates & \multicolumn{2}{c|}{2 (incl. 0 nuisance var.)} & \multicolumn{2}{c|}{10 (incl. 4 nuisance var.)} & \multicolumn{2}{c}{10 (incl. 8 nuisance var.)} \\
\hline
Parameter & $\beta=0.5$ & $\beta=1$ & $\beta=0.2$ & $\beta=0.4$ & $\beta=0.5$ & $\beta=1$ \\
\hline
XGBoost    & 86.0 (10.3)  & 103.9 (12.0)  & 109.3 (10.2) & 149.8 (11.9) & 116.9 (12.0) & 137.0 (11.3) \\
Random Forest & 70.60 (7.81) & 100.88 (7.98) & 83.57 (6.92) & 129.54 (7.04) & 89.02 (7.82) & 119.52 (9.09) \\
\hline
\end{tabular}
}
\caption{Mean integrated absolute errors (standard deviations) of intensity estimation by XGBoost \citep{Lu06082025} and by our random forest approach, for different Poisson models.}
\label{tab1}
\end{table}

From a computational standpoint, the performance strongly depends on the values of  hyperparameters, in particular the number of trees in the random forest and, for each tree, the minimum node size $n_{min}$ that controls tree depth. For our synthetic example in Figure~\ref{fig:non linear image}, each tree took approximately 0.12 seconds to train on a standard  single-core  2.60GHz CPU. In turn, for the real-data example in Figure~\ref{fig:bei PP}, each tree took approximately 5 seconds, because they are much deeper. Note that these trees can be trained  in parallel to get the final random forest. Finally, for the purely spatial random forest of Figure~\ref{fig:concrete} (on the plane) and Figure~\ref{fig:bei_mesh} (on a manifold), when the domain is encoded as a fine mesh, the training time per tree is below  0.1 second.

\section{Theory} \label{sec:theory}

\subsection{Asymptotic framework and notation}\label{sec:asympt}

In a standard asymptotic framework, we let the number of points
tend to infinity. For a point process, this can be achieved in
several ways, the two most popular settings being the infill
asymptotic regime and the increasing domain asymptotic regime.  
Following \cite{Choiruddin21}, we consider an unifying framework
that contains the two previous settings, but also intermediate
ones.  Accordingly, we introduce a sequence of point processes $X_n$, $n\in\N$, assumed to be observed on $W_n\subset \R^d$ and with intensity $\lambda_n(.)=a_n \lambda(.)$, where 
$a_n>0$ is known. 
Our aim is to estimate $\lambda(x)$ at a given $x$.  Letting the mean number of observed points tend to infinity  implies  that $a_n|W_n|\to\infty$, whenever $\lambda$ is bounded.
The infill asymptotic framework is consistent with  $a_n\to
\infty$ and $W_n=W$ being constant, while the increasing domain regime
corresponds to $a_n=1$ and $|W_n|\to \infty$. In all cases, the
shape of the intensity of $X_n$ is $\lambda(.)$, making the
target of our nonparametric estimation problem meaningful. Note that the sequence $(a_n)$ can be viewed as a normalisation. It is introduced only for theoretical convenience: it encodes the densification of points in an infill asymptotic regime. However in practice, that is for a given fixed $n_0$,  there is no loss of generality in choosing $a_{n_0}=1$.

Beyond the number of points, the cross-correlation of $X_n$ may evolve
as $n\to\infty$. We denote by $g_n$ its pair correlation
function (see Appendix~\ref{sec:pcf}). In an increasing domain asymptotic framework, we will
typically have $g_n=g$ for all $n\in\N$, for some fixed
pair correlation function $g$, while in an infill asymptotic
framework, $g_n$ will change with $n$ and  the strength of
correlation will typically decrease with $n$, as exemplified
for several models in Appendix~\ref{sec:pcf}. Our general assumption, similar to [C6] in
\cite{Choiruddin21},  is that there exists $c>0$ such that for
any $A\subset W_n$ and any $n$,
\begin{equation}\label{hyp gn}
a_n\int_{A^2} |g_n(u,v)-1|\der u \der v \leq  c|A|.
\end{equation}
Lemma~\ref{lem gn}, stated in Appendix~\ref{sec:pcf}, shows that this condition typically boils down to $\sup_v \int_{\R^d} |g(u,v)-1|\der u<\infty$ for some underlying fixed pair correlation function $g$. The latter is a mild standard
assumption of weak dependence, as already considered in many
other studies, see for instance \cite{guan2008}. 
In particular, we show in Appendix~\ref{sec:pcf} that \eqref{hyp gn} is satisfied for standard models, such as inhomogeneous Poisson point processes, Neyman-Scott models, log-Gaussian Cox processes, Mat\'ern hardcore models and determinantal point processes, under mild assumptions.

The estimation of $\lambda(x)$ in this setting is carried out as explained in Section~\ref{sec:methodology}, where we add the subset $n$ in the notations to stress the dependence in $n$. Accordingly, assuming that $\lambda(.)=f(z(.))$ for some covariate $z:\R^d \to \R^p$, we consider $M$ partitions $\pi^{(1)}_n,\dots, \pi^{(M)}_n$ of $z(W_n)$. Assuming that $x$ is such that $z(x)\in z(W_n)$, we denote by $I^{(i)}_{n}(x)$ the cell in the partition $\pi^{(i)}_n$ that contains $z(x)$ and $$A_n^{(i)}(x)=z^{-1}(I_n^{(i)}(x))\cap W_n.$$
The tree intensity estimator of $\lambda(x)$ based on the partition 
$\pi^{(i)}_n$ is then
 \begin{equation}\label{tree estimator}
\hat\lambda^{(i)}(x) = \frac{1}{a_n|A_n^{(i)}(x)|}\sum_{u\in X_n} \1_{u\in A_n^{(i)}(x)}
\end{equation}
and the random forest intensity estimator based on the $M$ partitions is given by
\begin{equation}\label{rf estimator}
\hat\lambda^{(RF)}(x) = \frac{1}{M} \sum_{i=1}^M \hat\lambda^{(i)}(x).
\end{equation}

For the theoretical analysis, we assume that each partition is generated independently of $X_n$, so that $\hat\lambda^{(RF)}(x)$ corresponds to a {\it  purely} random forest.

\subsection{Consistency}\label{sec: consistency}

We gather all needed assumptions for consistency below, including those discussed in the previous section.   For a
set $I$, we let  $\diam(I)=\sup_{y,z\in I}\|y-z\|$.

\begin{enumerate}[label={\bf(H\arabic*) },ref=H\arabic*,leftmargin=*,labelsep=0.5em]
\item The point process  $X_n$ is observed in $W_n\subset\R^d$, has intensity
$\lambda_n(.)=a_n\lambda(.)$ and its pair correlation function
$g_n$ satisfies \eqref{hyp gn}.  \label{H1}
\item $x\in \R^d$ is such that for all $n$,  
$z(x)\in z(W_n)$.  \label{hyp z(W_n)}
\item $\lambda(.)=f(z(.))$ where $z:\R^d\to\R^p$ and  where
$f:\R^p\to \R_+$ is a bounded function,  continuous at
$z(x)$.\label{hyp bounded}
\item The partition $\pi^{(1)}_n,\dots, \pi^{(M)}_n$ of $z(W_n)$ are generated independently of $X_n$. \label{hyp pi}
\item For all $i=1,\dots,M$,  $\diam(I^{(i)}_{n}(x))\to 0$ in
probability as $n\to\infty$.\label{hyp I}
\item For all $i=1,\dots,M$, 
$\E\left(1/(a_n |A^{(i)}_n(x)|)\right)\to 0$ as $n\to\infty$. \label{hyp an}
\end{enumerate}

\begin{theo}\label{th consistency}
        Under  \eqref{H1}-\eqref{hyp an}, we have as $n\to
        \infty$, $$\E\left[\left(\hat\lambda^{(RF)}(x) -
        \lambda(x)\right)^2\right] \to 0.$$

\end{theo}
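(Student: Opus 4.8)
The plan is to reduce the statement to a single tree and then carry out a bias--variance decomposition conditionally on the random partition, which is legitimate because by (H4) each partition is independent of $X_n$. Since $\hat\lambda^{(RF)}(x)-\lambda(x)=\frac1M\sum_{i=1}^M(\hat\lambda^{(i)}(x)-\lambda(x))$, convexity of $t\mapsto t^2$ gives
$$\E\big[(\hat\lambda^{(RF)}(x)-\lambda(x))^2\big]\le \frac1M\sum_{i=1}^M\E\big[(\hat\lambda^{(i)}(x)-\lambda(x))^2\big],$$
so it suffices to establish $L^2$-consistency of a single tree estimator \eqref{tree estimator}. Fix $i$, write $A=A_n^{(i)}(x)$ and $I=I_n^{(i)}(x)$, and condition on $\pi_n^{(i)}$. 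Because $X_n$ is independent of the partition, one obtains the decomposition $\E[(\hat\lambda^{(i)}(x)-\lambda(x))^2]=\E[\mathrm{Var}(\hat\lambda^{(i)}(x)\mid\pi_n^{(i)})]+\E[(\bar\lambda_A-\lambda(x))^2]$, where $\bar\lambda_A=|A|^{-1}\int_A\lambda(u)\,\der u$ is the conditional mean of the estimator.

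For the variance term I would invoke the standard second-order moment identity for point processes expressed through the pair correlation function: conditionally on $\pi_n^{(i)}$,
$$\mathrm{Var}\Big(\sum_{u\in X_n}\1_{u\in A}\Big)=\int_A\lambda_n(u)\,\der u+\int_{A^2}\lambda_n(u)\lambda_n(v)\big(g_n(u,v)-1\big)\,\der u\,\der v.$$
Dividing by $(a_n|A|)^2$ and using the boundedness $\lambda\le\Lambda$ from (H3), the first term is at most $\Lambda/(a_n|A|)$, while the second is controlled by the weak-dependence bound \eqref{hyp gn}, namely $a_n\int_{A^2}|g_n(u,v)-1|\,\der u\,\der v\le c|A|$, giving a contribution of order $c\Lambda^2/(a_n|A|)$. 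Hence $\mathrm{Var}(\hat\lambda^{(i)}(x)\mid\pi_n^{(i)})\le C/(a_n|A_n^{(i)}(x)|)$ for a constant $C$, and taking expectations, assumption (H6) forces this to vanish as $n\to\infty$.

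For the bias term I would combine the continuity of $f$ at $z(x)$ (H3) with the shrinking-cell condition (H5). For any $u\in A$ one has $z(u),z(x)\in I$, so $\|z(u)-z(x)\|\le\diam(I)$; given $\epsilon>0$, continuity yields $\delta>0$ such that $|f(z(u))-f(z(x))|<\epsilon$ whenever $\diam(I)<\delta$, whence $(\bar\lambda_A-\lambda(x))^2\le\epsilon^2$ on the event $\{\diam(I)<\delta\}$, and the crude bound $(2\Lambda)^2$ on its complement. Since $\diam(I)\to0$ in probability by (H5), we have $\P(\diam(I)\ge\delta)\to0$, so $\limsup_n\E[(\bar\lambda_A-\lambda(x))^2]\le\epsilon^2$; letting $\epsilon\downarrow0$ gives convergence to $0$, and combining with the variance term concludes the proof.

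The main obstacle is the variance control: it rests on correctly applying the pcf moment identity to the random cell $A$ and on the uniform weak-dependence bound \eqref{hyp gn} being valid for an arbitrary measurable subset of $W_n$ (here random, but fixed once we condition on $\pi_n^{(i)}$); the delicate bookkeeping is to make the normalisation by $a_n$ combine cleanly so that the bound reduces exactly to the quantity governed by (H6). By comparison the bias term is routine once the good-event/bad-event split absorbs the fact that (H5) only supplies convergence in probability rather than almost surely.
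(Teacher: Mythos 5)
Your proposal is correct and follows essentially the same route as the paper: Jensen's inequality to reduce to a single tree, a conditional bias--variance decomposition, the pair-correlation moment identity combined with \eqref{hyp gn} and \eqref{hyp an} for the variance, and continuity of $f$ at $z(x)$ together with \eqref{hyp I} for the bias. The only cosmetic difference is that you handle the bias via an explicit good/bad event split, whereas the paper first shows $B_n \to 0$ in probability and then invokes uniform integrability from the boundedness of $f$; the two arguments are equivalent here.
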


Hypothesis~\eqref{hyp I} demands that the partitions of $z(W_n)$
are such that the cell containing $z(x)$ concentrates around
$z(x)$.  This is a natural requirement for the bias to be
negligible. On the other hand, Hypothesis~\eqref{hyp an} and
Jensen's inequality imply (if $\lambda(x)\neq 0$) that the
expected number of observations of $X_n$ in $A^{(i)}_n(x)$ must
tend to infinity, which is also natural for the variance of
estimation to asymptotically vanish. Whether or not these
assumptions are met in practice is discussed in the following
examples: It depends on both the asymptotic regime (e.g. infill
or increasing domain) and the properties of the partitions $\pi^{(i)}_n$.

\medskip

\ex{ex no cov}
\noindent{\it Example \ref{ex no cov} (no covariate)}: Assume
that $z(u)=u$ for any $u\in \R^d$, that is the setting of Section~\ref{sec:purelyspatialpart}.  For the partitions $\pi_n^{(i)}$ of $z(W_n)=W_n$, consider  stationary tessellations with intensity $\gamma_n=h_n^{-d}$,
where $h_n>0$.  
With this notation, $h_n$ has the same interpretation
as the bandwidth in kernel estimation. 
Then the cell $I_n^{(i)}(x)=A_n^{(i)}(x)$ has the same distribution as
the zero cell of the tessellation $\pi_n^{(i)}$ and we have $\E\left(1/|A_n^{(i)}(x)|\right)=h_n^{-d}$, see
\eqref{mean area} in Appendix~\ref{sec:tessellations}. So 
\eqref{hyp an} is satisfied if  $a_n h_n^d\to \infty$. On the
other hand, \eqref{hyp I} is typically verified if $h_n\to 0$, as
for Poisson Vorono\"i,  Poisson Delaunay,  Poisson hyperplane and
STIT tessellations, see \cite{schneider2008,chiu2013} and
Appendix~\ref{sec:tessellations}. 
Consistency is thus ensured for these examples  whenever $h_n\to
0$ and $a_n h_n^d\to \infty$. Note that these conditions cannot
be met in an increasing domain asymptotic regime where $a_n=1$.
The lack of consistency in this setting is expected, since the number of events around $x$ does not increase and so the variance of estimation cannot vanish. 
In
other asymptotic regimes, consistency is ensured if  the total
number of points in the cell $I_n^{(i)}(x)=A_n^{(i)}(x)$, which is of order
$a_n h_n^d$,  tends to infinity while the diameter of the cell,
 of order $h_n$, tends to zero.

 \medskip

\ex{ex discrete}
 \noindent {\it Example \ref{ex discrete} (qualitative
 covariate)}: Assume that $z(.)$ is a binary variable, taking its
 values in $\{0,1\}$ (extension to more levels is
 straightforward) and that $z(W_n)=\{0,1\}$ for $n$ large enough,
 meaning that each level of $z$ is visited. 
 For the partitions $\pi_n^{(i)}$  of $z(W_n)$, it is natural to choose 
 the trivial deterministic partition $\{\{0\},
 \{1\}\}$. Then \eqref{hyp I} is obviously satisfied. 
  In turn, the set $A_n^{(i)}(x)$ is either
 $z^{-1}(0)\cap W_n$ or $z^{-1}(1)\cap W_n$ and  \eqref{hyp an}
 is satisfied if $a_n |A_n^{(i)}(x)|\to \infty$. 
 Therefore, if $a_n\to\infty$, as in the infill regime, consistency is ensured whenever $|A_n^{(i)}(x)|>0$,  or equivalently if the level sets of $z$ are not degenerated in $W_n$, in the sense that their volume is not zero. 
  In an increasing domain regime ($a_n=1$),
 consistency is ensured if $|A_n^{(i)}(x)|\to \infty$, meaning that
 each level set of $z$ covers an increasingly large  region of the
 observation domain. Note the advantage over Example~\ref{ex no
 cov}: By leveraging a (qualitative) covariate, consistency is
 possible even in an increasing domain asymptotic regime. This
 benefit has already been observed in \cite{guan2008} for a
 kernel estimator of the intensity based on a covariate, and is
 further investigated in the next section.
 
  \medskip

\ex{ex general}
\noindent {\it Example \ref{ex general} (general covariate)}: For
the partitions $\pi_n^{(i)}$ of $z(W_n)\subset\R^p$, consider
stationary tessellations  with  intensity $\gamma_n=h_n^{-p}$, as in
Example~\ref{ex no cov}, e.g., a Poisson Vorono\"i tessellation.
Then  \eqref{hyp I}  is satisfied whenever $h_n\to 0$.
Concerning \eqref{hyp an}, it is difficult to draw a general
statement, but the idea is that the number of events in the level
set $A_n^{(i)}(x)$ of $z$ must tend to infinity, even if the volume of
$A_n^{(i)}(x)$  typically tends to zero.
Let us present a
heuristic in an increasing domain asymptotic regime,  by
assuming that $z$ is the realisation of a stationary ergodic random
process.  
Then, for almost surely any realisation $z$, we anticipate by the ergodic theorem that 
$$|A_n^{(i)}(x)| = \int_{W_n} \1_{z(u)\in I_n^{(i)}(x)}\der u \approx |W_n|\, \P\left(Z\in   I_n^{(i)}(x) | I_n^{(i)}(x)\right),$$
where $Z$ follows the invariant distribution of the process $z$. The above probability is typically of order $| I_n^{(i)}(x)|$, so that $\E\left(|A_n^{(i)}(x)|\right)$ is of order $|W_n|\,\E(| I_n^{(i)}(x)|)$, which in turn is of order  $h_n^p |W_n|$. So we can expect  that \eqref{hyp an}  is satisfied if $a_n
h_n^p |W_n|\to \infty$. A formal treatment of this example is out
of the scope of this article. However, this heuristic confirms, as in the previous example, that the
introduction of a covariate makes it possible to ensure
consistency even when $a_n=1$.

\subsection{Benefits of covariates}
\label{sec:benefits covariates}

As deduced from Section~\ref{sec: consistency}, in particular Example~\ref{ex
no cov}, a random forest built from partitions of $W_n$, that is without using the covariate $z$, is generally consistent to estimate the intensity $\lambda$ in an
infill asymptotic regime, even if  $\lambda$ actually depends on some
covariate $z$ through the relation $\lambda(u)=f(z(u))$.
In this section, we show the benefits of considering partitions of $z(W_n)$ instead of partitions of $W_n$, when the latter relation is
 trustworthy. 

A first advantage of using an estimator based on partitions of $z(W_n)$ is that it
allows for  the estimation of $\lambda(x)$ even for $x\notin W_n$,
provided $z(x)\in z(W_n)$ and $z(x)$ is known. This is useful
when it comes to predict the intensity outside the observation
region, where the covariate is observed but not the point process
of interest.
A second advantage is that the rate of convergence of
$\hat\lambda^{(RF)}(x)$  is generally improved when we consider
partitions of $z(W_n)$. As argued next, the global picture is as
follows:
\begin{enumerate}[label=(\roman*),leftmargin=2em,labelsep=0.3em]
\item In an increasing domain asymptotic regime ($a_n=1$ and
$|W_n|\to\infty$), estimation based on partitions of $W_n$ is
generally not consistent (see Example~ \ref{ex no cov}). In contrast, by 
leveraging a covariate $z$,  consistency can be achieved, 
provided that $z$ takes the same values  sufficiently often over $W_n$,
as is the case for a qualitative covariate  (Example~\ref{ex discrete}) or
the realisation of a stationary ergodic process  (Example~\ref{ex
general}).

\item In an infill asymptotic regime ($a_n\to\infty$ and
$W_n=W$), both approaches are generally consistent, and they can
both achieve the minimax rate of convergence when $\lambda$ and
$z$ are H\"older continuous. However in certain cases, as with a qualitative covariate, leveraging $z$ can lead to a strictly faster rate of convergence. 

\item In an intermediate asymptotic regime  ($a_n\to\infty$ and
$|W_n|\to\infty$), the estimator based on tessellations of
$z(W_n)$  generally converges faster than the one based on
$W_n$, provided that $z$ is sufficiently smooth and  takes repeated values frequently enough. 
\end{enumerate}

The first claim (i) is already clear from Examples~\ref{ex
discrete} and \ref{ex general}, see also \cite{guan2008}. To
support the two other claims, we first state the following rate
of convergence that involves a classical bias-variance tradeoff,
where the variance corresponds to the first term in the
right-hand side of \eqref{rate holder} below.  
To prove it, we strengthen 
Assumption~\eqref{hyp pi} by assuming that the partitions  $\pi_n^{(i)}$, $i=1,\dots,M$, are  independent and identically distributed. We then denote by $I_n(x)$ and $A_n(x)$ generic cells that have the same distributions as $I_n^{(i)}(x)$ and $A_n^{(i)}(x)$, respectively. 
Note that our
pointwise H\"older continuous assumption below implies a flat behaviour
of $f$ at $z(x)$ when $\beta>1$, the derivative being zero in
this case, and a standard H\"older regularity when $\beta\leq 1$.

\begin{prop}\label{prop rate holder}
In addition to \eqref{H1}-\eqref{hyp an}, assume that the
partitions $\pi_n^{(i)}$ are i.i.d. and that $f$ is pointwise
$\beta$-H\"older continuous at $z(x)$, for some $\beta>0$, i.e.,
$f(y)-f(z(x))=O(\|y-z(x)\|^\beta)$ as $y\to z(x)$.  Then the
purely random forest \eqref{rf estimator} satisfies,  for some
$c>0$ (depending on $x$ and $\beta$), 
\begin{equation}\label{rate holder}
\E\left[\left(\hat\lambda^{(RF)}(x) - \lambda(x)\right)^2\right] 
\leq 
c\, \E\left(\frac 1{a_n |A_n(x)|}\right) +  c\,\E\left(\diam(I_n(x))^{2\beta}\right) .
\end{equation}
\end{prop}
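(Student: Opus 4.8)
\section*{Proof proposal}

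The plan is to condition on the $\sigma$-algebra $\mathcal F$ generated by the partitions $\pi_n^{(1)},\dots,\pi_n^{(M)}$, which are independent of $X_n$ by \eqref{hyp pi}, and to use the exact decomposition
\[
\E\left[\left(\hat\lambda^{(RF)}(x)-\lambda(x)\right)^2\right]
= \E\left[\var\!\left(\hat\lambda^{(RF)}(x)\mid\mathcal F\right)\right]
+ \E\left[\left(\E[\hat\lambda^{(RF)}(x)\mid\mathcal F]-\lambda(x)\right)^2\right],
\]
whose cross term vanishes because $\E[\hat\lambda^{(RF)}(x)\mid\mathcal F]-\lambda(x)$ is $\mathcal F$-measurable. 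The first (variance) term will produce the $\E(1/(a_n|A_n(x)|))$ contribution of \eqref{rate holder} and the second (squared bias) term the $\E(\diam(I_n(x))^{2\beta})$ contribution.

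For the bias term, conditionally on $\mathcal F$ each cell $A_n^{(i)}(x)$ is deterministic, and since $X_n$ has intensity $\lambda_n=a_n\lambda$ one gets $\E[\hat\lambda^{(i)}(x)\mid\mathcal F]=|A_n^{(i)}(x)|^{-1}\int_{A_n^{(i)}(x)}\lambda(u)\,\der u$, the average of $\lambda$ over the cell. Every $u\in A_n^{(i)}(x)$ satisfies $z(u)\in I_n^{(i)}(x)\ni z(x)$, hence $\|z(u)-z(x)\|\le \diam(I_n^{(i)}(x))$, so the pointwise $\beta$-H\"older bound applies once the diameter is small; combining it with the boundedness of $f$ in \eqref{hyp bounded} to absorb large-diameter cells into a single constant yields $|\E[\hat\lambda^{(i)}(x)\mid\mathcal F]-\lambda(x)|\le c\,\diam(I_n^{(i)}(x))^\beta$. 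Averaging over the $M$ trees and applying Cauchy--Schwarz gives $(\E[\hat\lambda^{(RF)}(x)\mid\mathcal F]-\lambda(x))^2\le (c^2/M)\sum_i \diam(I_n^{(i)}(x))^{2\beta}$; taking expectations and using that the $\pi_n^{(i)}$ are i.i.d.\ collapses this to $c^2\,\E(\diam(I_n(x))^{2\beta})$.

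For the variance term I first bound one tree. Writing $N(A)=\sum_{u\in X_n}\1_{u\in A}$ and using the second-order formula
\[
\var(N(A)) = \int_A \lambda_n(u)\,\der u + \int_{A^2}\lambda_n(u)\lambda_n(v)\big(g_n(u,v)-1\big)\,\der u\,\der v,
\]
the boundedness of $\lambda$ controls the first term by $a_n\|\lambda\|_\infty |A|$ and, crucially, assumption \eqref{hyp gn} controls the second by $c\,a_n\|\lambda\|_\infty^2|A|$. Hence $V_i:=\var(\hat\lambda^{(i)}(x)\mid\mathcal F)=(a_n|A_n^{(i)}(x)|)^{-2}\var(N(A_n^{(i)}(x)))\le C/(a_n|A_n^{(i)}(x)|)$. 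The trees share the same point pattern and remain correlated given $\mathcal F$, so I cannot discard the cross terms; instead I bound $|\mathrm{Cov}(\hat\lambda^{(i)},\hat\lambda^{(j)}\mid\mathcal F)|\le\sqrt{V_iV_j}$, whence $\var(\hat\lambda^{(RF)}(x)\mid\mathcal F)\le M^{-2}(\sum_i\sqrt{V_i})^2\le M^{-1}\sum_i V_i$. Taking expectations and invoking the i.i.d.\ assumption gives $C\,\E(1/(a_n|A_n(x)|))$, and setting the constant to the maximum of the two constants completes the argument.

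I expect the delicate step to be this variance bound: establishing the count-covariance identity and, above all, using \eqref{hyp gn} to tame the second-order term so that the weak dependence of $X_n$ does not inflate the variance beyond the Poisson-like order $1/(a_n|A_n(x)|)$. The Cauchy--Schwarz treatment of the inter-tree correlation is what lets the argument go through despite the shared point pattern, at the expected cost of not exhibiting any variance reduction in $M$, consistent with a worst-case purely random forest guarantee.
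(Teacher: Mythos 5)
Your proof is correct and takes essentially the same route as the paper's: the paper applies Jensen's inequality first to reduce the forest MSE to a single tree's MSE and then does the conditional bias--variance decomposition per tree, whereas you decompose at the forest level and then collapse the variance (via Cauchy--Schwarz on the inter-tree covariances induced by the shared point pattern) and the bias (via Jensen) back to a single tree --- an equivalent reorganization giving the same bound. The two key technical ingredients are identical to the paper's: the count-variance identity controlled by \eqref{hyp gn} and the boundedness of $\lambda$, and the local H\"older bound at $z(x)$ with boundedness of $f$ used to absorb cells whose diameter exceeds the H\"older neighbourhood (the paper phrases this absorption as a Markov-inequality step on the indicator $\1_{\diam(I_n(x))>\eta_x}$, which is the same device).
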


To appreciate the behaviour in an infill asymptotic regime, as
claimed in (ii), we introduce the simple deterministic
tessellation $\mathcal T_k(u)$, defined for $u\in\R^k$ as the
Vorono\"i tessellation in $\R^k$ with nuclei $h_n(\Z^k + u)$,
that is the simple lattice centered at $u$ with side length
$h_n$. The extension to regular random tessellations having the
scaling property (see Appendix~\ref{sec:tessellations}) is
straightforward for the first case (i.e., partitions of $W_n$)
but more technical for the second one (i.e., partitions of
$z(W_n)$), and we omit it.  Note that the
optimal rate obtained in both cases coincides with the minimax
rate of convergence established  in Theorem~6.5 in
\cite{kutoyants98} for  H\"older continuous
intensities. 

\begin{cor}\label{cor infill} In addition to the assumptions of
Proposition~\ref{prop rate holder}, assume  that $z$ is
$\alpha$-H\"older continuous at $x$, so that $\lambda$ is
$\alpha\beta$-H\"older continuous at $x$.
\begin{itemize}
\item If the $\pi_n^{(i)}$'s  are partitions of $W_n$, each being
equal to $W_n\cap \mathcal T_d(x)$ , then  for some $c>0$, 
$$\E\left[\left(\hat\lambda^{(RF)}(x) - \lambda(x)\right)^2\right]  
 \leq c \left(\frac 1 {a_n h_n^d} + h_n^{2\alpha\beta}\right).$$
\item  If the $\pi_n^{(i)}$'s  are partitions of $z(W_n)$, each
being equal to $z(W_n)\cap \mathcal T_p(z(x))$, then  for some
$c>0$, if $h_n\to 0$, 
$$\E\left[\left(\hat\lambda^{(RF)}(x) - \lambda(x)\right)^2\right]  
\leq c \left(\frac 1 {a_n h_n^{d/\alpha} }+ h_n^{2\beta}\right).$$
\end{itemize}
In both cases, the optimal rate  when $a_n\to\infty$  is
$a_n^{-2\alpha\beta/(2\alpha\beta+d)}$, achieved by choosing
$h_n=O(a_n^{-1/(2\alpha\beta+d)})$ in the first case and
$h_n=O(a_n^{-1/(2\beta+d/\alpha)})$  in the second case. 
\end{cor}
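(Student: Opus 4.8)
The plan is to derive both bounds directly from Proposition~\ref{prop rate holder}, exploiting the fact that each tessellation $\mathcal T_k$ is \emph{deterministic}: the cell $I_n(x)$ containing $z(x)$ (or $x$) and its spatial preimage $A_n(x)$ are then non-random, so the two expectations in \eqref{rate holder} disappear and the problem reduces to evaluating a single volume $|A_n(x)|$ and a single diameter $\diam(I_n(x))$. In both cases the bias exponent is read off from the Hölder regularity, while the variance term is read off from the volume of the relevant cell in $W_n$; the final rate then follows from optimising the resulting bias--variance tradeoff in $h_n$.

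For the first case I would view the partition of $W_n$ as coming from the identity covariate $u\mapsto u$, so that the intensity, regarded as a function on $W_n$, is $\lambda=f\circ z$. Since $z$ is $\alpha$-Hölder at $x$ and $f$ is $\beta$-Hölder at $z(x)$, the composition $f\circ z$ is $\alpha\beta$-Hölder at $x$, and applying Proposition~\ref{prop rate holder} with this exponent replaces $\beta$ by $\alpha\beta$ in the bias term. The cell $I_n(x)=A_n(x)$ is the cube of side $h_n$ centred at $x$ intersected with $W_n$; for $n$ large enough (so that $h_n$ is small and the cube lies inside $W_n$) one has $|A_n(x)|=h_n^d$ and $\diam(I_n(x))=O(h_n)$, and substituting into \eqref{rate holder} gives the first bound $c\,(a_n^{-1}h_n^{-d}+h_n^{2\alpha\beta})$.

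For the second case I would apply Proposition~\ref{prop rate holder} directly with the genuine covariate $z$ and exponent $\beta$. The bias term is immediate, since $I_n(x)$ is the cube of side $h_n$ centred at $z(x)$ in $\R^p$, so $\diam(I_n(x))=O(h_n)$ and the bias is $O(h_n^{2\beta})$. The heart of the argument, and the step I expect to be the main obstacle, is the \emph{lower} bound on $|A_n(x)|=|z^{-1}(I_n(x))\cap W_n|$ controlling the variance. Here I would use that $\alpha$-Hölder continuity of $z$ at $x$ gives $\|z(u)-z(x)\|\le C\|u-x\|^\alpha$ near $x$, so $z(u)$ lies within distance $h_n/2$ of $z(x)$ --- hence inside $I_n(x)$, which contains $B(z(x),h_n/2)$ --- as soon as $\|u-x\|\le c\,h_n^{1/\alpha}$. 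This yields the inclusion $B(x,c\,h_n^{1/\alpha})\subseteq z^{-1}(I_n(x))$, so that for $n$ large (using $h_n\to0$, which keeps the ball inside $W_n$ and legitimises the pointwise Hölder bound) we obtain $|A_n(x)|\ge c'\,h_n^{d/\alpha}$ and thus a variance of order $a_n^{-1}h_n^{-d/\alpha}$. Only this one-sided inclusion is needed: Hölder continuity bounds $\|z(u)-z(x)\|$ from above and therefore produces an \emph{inner} ball, which is exactly what a volume lower bound requires; no reverse (and unavailable) estimate on $z$ is necessary.

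It then remains to optimise each tradeoff in $h_n$. Balancing $a_n^{-1}h_n^{-d}$ against $h_n^{2\alpha\beta}$ gives $h_n=O(a_n^{-1/(2\alpha\beta+d)})$, and balancing $a_n^{-1}h_n^{-d/\alpha}$ against $h_n^{2\beta}$ gives $h_n=O(a_n^{-1/(2\beta+d/\alpha)})$; a direct substitution shows that both choices produce the same rate $a_n^{-2\alpha\beta/(2\alpha\beta+d)}$ as $a_n\to\infty$, matching the minimax rate of \cite{kutoyants98}.
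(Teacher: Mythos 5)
Your proposal is correct and follows essentially the same route as the paper: the first bound is obtained by applying Proposition~\ref{prop rate holder} with the identity covariate and exponent $\alpha\beta$ to the deterministic cube of volume $h_n^d$, and the second by combining the $O(h_n^{2\beta})$ bias with the inner-ball inclusion $B(x,c\,h_n^{1/\alpha})\subseteq z^{-1}(I_n(x))$ derived from the pointwise $\alpha$-H\"older bound, which is exactly the paper's lower bound on $|A_n(x)|$. The concluding bias--variance balancing and the resulting common rate also match the paper's argument.
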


The following corollary focuses on  a binary covariate $z$, and
shows that leveraging  $z$ improves the rate of convergence in
all asymptotic regimes, thus supporting the claims in (ii) and (iii) above in favour of using covariates. This setting corresponds to an extreme situation
of a smooth covariate that  takes repeated values frequently enough across space.  We recall that regular
tessellations having the scaling property include stationary
Poisson Vorono\"i tessellations,  stationary Poisson Delaunay
tessellations, stationary Poisson hyperplane tessellations and 
STIT tessellations (see Proposition~\ref{prop scaling} in
Appendix~\ref{sec:tessellations}).
 \begin{cor}\label{cor discrete} In addition to the assumptions
of Proposition~\ref{prop rate holder}, assume  that $z$ is a
binary variable, i.e. $z(W_n)=\{0,1\}$, continuous at $x$, such
that
$|z^{-1}(z(x))\cap W_n|> c |W_n|$ where $c>0$. 
\begin{itemize}
\item If the $\pi_n^{(i)}$'s are partitions of $W_n$, built as
regular tessellations with intensity $h_n^{-d}$ having the
scaling property, then for some $c>0$,  provided
$h_n<a_n^{-\epsilon}$ for some $\epsilon> 0$, 
$$\E\left[\left(\hat\lambda^{(RF)}(x) - \lambda(x)\right)^2\right]  
\leq \frac c {a_n h_n^d}.$$
\item If the $\pi_n^{(i)}$'s correspond to the simple partition
$\{\{0\}, \{1\}\}$ of $z(W_n)$, then  for some $c>0$,
$$\E\left[\left(\hat\lambda^{(RF)}(x) - \lambda(x)\right)^2\right]  
\leq \frac c {a_n |W_n|}.$$
\end{itemize}
\end{cor}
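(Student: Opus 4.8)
The plan is to derive both bounds from the bias--variance estimate of Proposition~\ref{prop rate holder}, after observing that the binary, pointwise-continuous covariate forces $\lambda$ to be locally constant around $x$. Indeed, since $z(W_n)=\{0,1\}$ and $z$ is continuous at $x$, taking accuracy $\eta=1/2$ in the definition of continuity yields a radius $r_0>0$ with $z(u)=z(x)$ for all $u\in B(x,r_0)$; hence $\lambda=f\circ z\equiv \lambda(x)$ on $B(x,r_0)$. Consequently $\lambda$, viewed as a function on $\R^d$, is pointwise $\beta$-Hölder at $x$ for \emph{every} $\beta>0$, since $|\lambda(y)-\lambda(x)|\le \|f\|_\infty\,\1_{\|y-x\|\ge r_0}\le \|f\|_\infty r_0^{-\beta}\|y-x\|^\beta$; likewise $f$ is trivially $\beta$-Hölder at the isolated point $z(x)\in\{0,1\}$. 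This is what lets me invoke Proposition~\ref{prop rate holder} in both situations, with $\beta$ chosen freely (the constant $c$ there is allowed to depend on $\beta$).

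I would treat the second bullet first, as it is immediate. With the deterministic partition $\{\{0\},\{1\}\}$ of $z(W_n)$, the cell containing $z(x)$ is the single point $I_n(x)=\{z(x)\}$, so $\diam(I_n(x))=0$ and the bias term in \eqref{rate holder} vanishes identically. The variance term is deterministic and equals $1/(a_n|A_n(x)|)$ with $A_n(x)=z^{-1}(z(x))\cap W_n$; the hypothesis $|z^{-1}(z(x))\cap W_n|>c|W_n|$ then gives $1/(a_n|A_n(x)|)\le (1/c)\,(a_n|W_n|)^{-1}$, which is exactly the claimed bound $c/(a_n|W_n|)$.

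For the first bullet I regard the spatial tessellation as a partition associated with the identity ``covariate'', so that $I_n(x)=A_n(x)$ is the zero cell of a regular tessellation of intensity $h_n^{-d}$ and Proposition~\ref{prop rate holder} applies with $f=\lambda$. The variance term is handled exactly as in Example~\ref{ex no cov}: by the mean-area identity for the zero cell, $\E(1/|A_n(x)|)=h_n^{-d}$, giving the target order $1/(a_n h_n^d)$. For the bias term I would use the scaling property: the diameter of the zero cell of the intensity-$h_n^{-d}$ tessellation has the law of $h_n D$, where $D$ is the diameter of the zero cell of the intensity-$1$ tessellation, so $\E(\diam(I_n(x))^{2\beta})=h_n^{2\beta}\,\E(D^{2\beta})$. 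Combining this with the local-constancy Hölder constant $r_0^{-\beta}$ yields a squared bias of order $(h_n/r_0)^{2\beta}\E(D^{2\beta})$, and multiplying by $a_n h_n^{d}$ and using $h_n<a_n^{-\epsilon}$ bounds this contribution by $a_n^{\,1-\epsilon(2\beta+d)}$ up to constants. Choosing $\beta$ large enough that $\epsilon(2\beta+d)>1$ forces this to vanish as $a_n\to\infty$, so the bias is negligible relative to the variance and the total is $\le c/(a_n h_n^d)$.

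The main obstacle is the bias control in the first bullet: one must know that the zero cell of a regular tessellation with the scaling property has finite moments of all orders, $\E(D^{2\beta})<\infty$, for the large $\beta$ dictated by $\epsilon$ and $d$. This is the only place where the structural ``regularity'' of the tessellation is genuinely used (Poisson--Vorono\"i, Poisson--Delaunay, Poisson hyperplane and STIT all qualify), and it is what makes the condition $h_n<a_n^{-\epsilon}$ both natural and sufficient: it guarantees that the super-polynomially small bias is dominated by the variance, no matter how fast $a_n\to\infty$. The remaining steps---the exact vanishing of the bias for the deterministic partition and the volume lower bound---are routine.
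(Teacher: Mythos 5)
Your proposal is correct and follows essentially the same route as the paper's proof: you exploit the local constancy of $\lambda$ near $x$ (forced by the binary, continuous $z$) to get pointwise $\beta$-H\"older continuity for every $\beta>0$, feed this into Proposition~\ref{prop rate holder}, and then use the scaling/regularity properties of the zero cell together with $h_n<a_n^{-\epsilon}$ and the choice $2\beta>1/\epsilon-d$ to make the bias negligible, while for the level-set partition the bias vanishes exactly and the volume hypothesis gives the $1/(a_n|W_n|)$ rate. The only cosmetic difference is that for the second bullet the paper observes $B_n=0$ directly from $z(u)=z(x)$ on $A_n(x)$, whereas you argue via $\diam(I_n(x))=0$; these are equivalent.
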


To further support the claim (iii), we may consider the same
setting as in Corollary~\ref{cor infill} by assuming in addition
that $z$ is a periodic function. This is another instance of a
smooth covariate at $x$ that takes repeated values frequently enough. Then the bias for partitions of $z(W_n)$ is still
$h_n^{2\beta}$ while for the variance, we note that due to
periodicity $|A_n(x)|=O(|W_n| \times |z^{-1}(I_n(x))\cap W_1|)$
when $|W_n|\to \infty$. Moreover, by the same argument as in the
proof of Corollary~\ref{cor infill}, we may leverage the
$\alpha$-H\"older continuity of $z$ at $x$ to show that
$|z^{-1}(I_n(x))\cap W_1|=O(h_n^{d/\alpha})$ when $h_n\to 0$. We
then obtain that the rate of convergence in this case is of order 
$$\frac{1}{a_n|W_n| h_n^{d/\alpha}} + h_n^{2\beta},$$ provided
$|W_n|\to \infty$ and $h_n\to 0$, while the rate of convergence
based on partitions of $W_n$ remains similar as in
Corollary~\ref{cor infill}, leading to a slower rate when
$a_n\to\infty$ and $h_n$ is chosen as the optimal value. 

Alternatively, now suppose that in addition to the setting of
Corollary~\ref{cor infill}, $z$ is the realisation of a stationary ergodic
process in $\R^p$. Then following the heuristic in
Example~\ref{ex general}, the variance for partitions of $z(W_n)$
can be expected to be of order $1/(a_n h_n^p |W_n|)$ when
$|W_n|\to \infty$, while by  the $\alpha$-H\"older continuity of
$z$ it is also less than  $1/(a_n h_n^{d/\alpha})$  if $h_n\to
0$, see Corollary~\ref{cor infill}. The bias remains in turn of
order $h_n^{2\beta}$. This means that the optimal rate when both
$a_n\to\infty$ and $|W_n|\to\infty$ becomes $\min((a_n
|W_n|)^{-2\beta/(2\beta+p)},a_n^{-2\alpha\beta/(2\alpha\beta+d)})$.
This is to be compared with the optimal rate
$a_n^{-2\alpha\beta/(2\alpha\beta+d)}$ for partitions of $W_n$. In this case, the estimation
based on partitions of $z(W_n)$ cannot achieve a worst rate than
partitions based on $W_n$, and can be faster in some settings.

\subsection{Benefits of a random forest over a single tree}
\label{sec: benefit rf}

While it is clear from inequality \eqref{eqm tree} in the proofs that a purely random
forest performs at least as well as a single tree, the following
simple result helps understanding the possible gain offered by a
random forest. 

\begin{lem}\label{lem EQM}
If $\hat\lambda^{(RF)}(x)$, given by \eqref{rf estimator}, is a
purely random forest in the sense that the partitions are i.i.d.
and follow \eqref{hyp pi}, then
\begin{equation}\label{inequality RF}
\E\left[\left(\hat\lambda^{(RF)}(x) - \lambda(x)\right)^2\right] 
\leq 
\E\left[ \V(\hat\lambda^{(1)}(x)|\pi_n^{(1)})\right] + \frac 1 M  \V(B_n) + \E(B_n)^2,
\end{equation}
where $B_n=\E\left(\hat\lambda^{(1)}(x)|\pi_n^{(1)}\right) -
\lambda(x)$ is the conditional bias of a single tree. 
\end{lem}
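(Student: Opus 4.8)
The plan is to split the single-tree error into a conditionally centred part and a conditional bias part, and then to exploit that the partitions are i.i.d.\ and independent of $X_n$ (Assumption~\eqref{hyp pi}). Writing $W_i = \hat\lambda^{(i)}(x) - \E\bigl(\hat\lambda^{(i)}(x)|\pi_n^{(i)}\bigr)$ for the conditionally centred error of tree $i$ and $B_n^{(i)} = \E\bigl(\hat\lambda^{(i)}(x)|\pi_n^{(i)}\bigr) - \lambda(x)$ for its conditional bias, so that $B_n^{(1)},\dots,B_n^{(M)}$ are i.i.d.\ copies of $B_n$, I would first record the decomposition
$$\hat\lambda^{(RF)}(x) - \lambda(x) = \frac1M\sum_{i=1}^M W_i + \frac1M\sum_{i=1}^M B_n^{(i)},$$
and expand the squared error into a variance term $\E\bigl[(M^{-1}\sum_i W_i)^2\bigr]$, a bias term $\E\bigl[(M^{-1}\sum_i B_n^{(i)})^2\bigr]$, and twice the cross term.

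For the bias term, the summands $B_n^{(i)}$ are i.i.d., so its second moment is exactly $\frac1M\V(B_n) + \E(B_n)^2$, which already produces the last two terms of \eqref{inequality RF}.

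Next I would show that the cross term vanishes. Each $W_i$ is a function of $(\pi_n^{(i)}, X_n)$ satisfying $\E(W_i|\pi_n^{(i)})=0$, hence $\E(W_i)=0$. In the diagonal case, conditioning on $\pi_n^{(i)}$ and using that $B_n^{(i)}$ is $\pi_n^{(i)}$-measurable gives $\E(W_i B_n^{(i)}) = \E\bigl[B_n^{(i)}\,\E(W_i|\pi_n^{(i)})\bigr]=0$. In the off-diagonal case $i\neq j$, $B_n^{(j)}$ is a function of $\pi_n^{(j)}$ alone, which by \eqref{hyp pi} and the i.i.d.\ assumption is independent of the pair $(\pi_n^{(i)}, X_n)$, hence independent of $W_i$; therefore $\E(W_i B_n^{(j)}) = \E(W_i)\E(B_n^{(j)}) = 0$. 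Summing over $i,j$, the cross term is zero.

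It remains to bound the variance term. By exchangeability of the $W_i$,
$$\E\Bigl[\Bigl(\tfrac1M\sum_{i=1}^M W_i\Bigr)^2\Bigr] = \frac1M\,\E(W_1^2) + \frac{M-1}{M}\,\E(W_1 W_2),$$
where $\E(W_1^2)=\E\bigl[\V(\hat\lambda^{(1)}(x)|\pi_n^{(1)})\bigr]$. The main point is that the off-diagonal term $\E(W_1 W_2)$ need not vanish, since distinct trees are built on the same point pattern $X_n$ and are therefore dependent; I would control it by Cauchy--Schwarz, $\E(W_1 W_2)\le \sqrt{\E(W_1^2)\E(W_2^2)}=\E(W_1^2)$, using that $W_1,W_2$ are identically distributed. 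This gives $\E\bigl[(M^{-1}\sum_i W_i)^2\bigr]\le \E\bigl[\V(\hat\lambda^{(1)}(x)|\pi_n^{(1)})\bigr]$, and adding the three pieces yields \eqref{inequality RF}. The only genuinely delicate step is this treatment of the between-tree covariance; everything else is bookkeeping with the tower property and the independence structure of \eqref{hyp pi}.
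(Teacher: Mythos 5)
Your proof is correct and follows essentially the same route as the paper: your explicit cancellation of the cross term is exactly the paper's Pythagorean decomposition conditional on $\pi_n^{(1)},\dots,\pi_n^{(M)}$, your Cauchy--Schwarz bound on $\E(W_1W_2)$ yields the same estimate as the paper's Jensen step for the conditionally centred part, and the treatment of the i.i.d.\ conditional biases is identical. Nothing further is needed.
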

As already observed in \cite{arlot2014,mourtada2020, oreilly2021}
for regression functions, since a single tree is a piecewise
constant function, its bias $B_n$ can be large when it comes to
estimate a smooth intensity function $\lambda(x)$.
While we can expect $\E(B_n)$ to alleviate this deficiency by the
averaging effect over the partitions' distribution, $\V(B_n)$
might be sub-optimal. A random forest, that averages a large
amount of single trees, becomes smoother than each of them, as reflected by the second term in \eqref{inequality RF} where $\V(B_n)$ is reduced
by a factor $M$.

Let us illustrate more specifically this phenomenon in the case
where there are no covariates, i.e. $z(u)=u$  as in
Example~\ref{ex no cov}, and in an infill asymptotic regime,
along similar lines as carried out in \cite{mourtada2020} and
\cite{oreilly2021} for regression functions.  By corollary~\ref{cor
infill}, if $\lambda$ is a $\beta$-H\"older function
with $\beta\in (0,1]$, the  minimax rate of convergence of a
random forest is obtained whatever $M\geq 1$ and is thus also
achieved by a single tree ($M=1$). But if $\lambda$ is smoother
(but still not flat at $x$), more specifically if  its derivative
is ($\beta-1$)-H\"older continuous with $\beta\in (1,2]$, 
then the following proposition shows that  the rate of
convergence of a single tree is sub-optimal, while a random
forest can still achieve the minimax rate of convergence for $M$
large enough. Note that this assumption does not  imply the
pointwise $\beta$-H\"older continuity at $x$ with $\beta>1$, a
case where $\lambda$ is flat at $x$ and where a single tree
achieves the same rate as a random forest, as proved in
Corollary~\ref{cor infill}.
 
Denote  by $C^{1,\beta-1}$, for $\beta\in (1,2]$, the space of
functions $\lambda$ on $\widebar W  = \bigcup_n W_n$ that are
differentiable and satisfy $\sup_{u\in \widebar W} \|\nabla
\lambda(u)\|<\infty$ and for all $u,v\in \widebar W$, $\|\nabla
\lambda(u) - \nabla \lambda(v)\| \leq c\|u-v\|^{\beta-1}$ for
some $c>0$.

\begin{prop}\label{prop rate cov} 
Assume \eqref{H1}-\eqref{hyp
an} where $z(u)=u$ and  $\lambda\in C^{1,\beta-1}$. If the
partitions $\pi_n^{(i)}$ of $W_n$ are i.i.d., each built from a
stationary regular tessellation with intensity $h_n^{-d}$ having
the scaling property, then for some $c>0$
\begin{equation}\label{rate stationary}
\E\left[\left(\hat\lambda^{(RF)}(x) - \lambda(x)\right)^2\right] 
\leq
c \left(\frac{1} {a_n h_n^d} + \frac{h_n^2} M + h_n^{2\beta}\right).
\end{equation}
When $a_n\to\infty$,  the minimax rate $a_n^{2\beta/(d+2\beta)}$
is obtained for $h_n=O(a_n^{-1/(d+2\beta)})$ and
$M>h_n^{2-2\beta}$. 
\end{prop}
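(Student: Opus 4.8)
The plan is to start from the variance decomposition of Lemma~\ref{lem EQM}, which splits the mean squared error into the expected conditional variance $\E[\V(\hat\lambda^{(1)}(x)\mid\pi_n^{(1)})]$, the rescaled bias fluctuation $\V(B_n)/M$, and the squared average bias $\E(B_n)^2$. Each of the three terms in \eqref{rate stationary} will be produced by bounding one of these pieces, so the whole argument reduces to controlling the conditional variance and the first two moments of the conditional bias $B_n=\frac{1}{|A_n(x)|}\int_{A_n(x)}(\lambda(u)-\lambda(x))\,\der u$.

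For the conditional variance I would condition on the partition and use the standard second-order moment formula for a point process with intensity $a_n\lambda$ and pair correlation $g_n$, namely $\V(N(A_n(x))\mid\pi)=a_n\int_{A_n(x)}\lambda+a_n^2\int_{A_n(x)^2}\lambda(u)\lambda(v)(g_n(u,v)-1)$. Dividing by $(a_n|A_n(x)|)^2$, the first summand is at most $\|f\|_\infty/(a_n|A_n(x)|)$ by \eqref{hyp bounded}, while the second is controlled by the weak-dependence bound \eqref{hyp gn}, which yields $\int_{A_n(x)^2}|g_n-1|\le c|A_n(x)|/a_n$ and hence a contribution of the same order $c/(a_n|A_n(x)|)$. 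Taking expectations over the partition and invoking the mean-area formula $\E(1/|A_n(x)|)=h_n^{-d}$ from \eqref{mean area} (valid since the cell containing $x$, recentred at $x$, has the law of the zero cell) produces the first term $c/(a_n h_n^d)$.

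The heart of the proof is the bias analysis, which exploits both the $C^{1,\beta-1}$ regularity and the symmetry of the random tessellation. A Taylor expansion at $x$ gives $\lambda(u)-\lambda(x)=\nabla\lambda(x)\cdot(u-x)+R(u,x)$ with $|R(u,x)|\le c\|u-x\|^\beta$, using the $(\beta-1)$-Hölder continuity of $\nabla\lambda$ through the integral form of the remainder. Averaging over the cell, $B_n=L_n+\tilde R_n$ with a linear part $L_n=\nabla\lambda(x)\cdot(\bar u_n-x)$, where $\bar u_n$ is the cell barycenter, and a remainder $\tilde R_n$ with $|\tilde R_n|\le c\,\diam(A_n(x))^\beta$. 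By the scaling property, $A_n(x)-x$ has the law of $h_n$ times the zero cell of the unit-intensity tessellation, so $\E(\tilde R_n^2)=O(h_n^{2\beta})$ and, crucially, $\bar u_n-x$ equals $h_n$ times the barycenter of that zero cell. Here is the key step: for the isotropic regular tessellations considered, the zero cell is symmetric in distribution under $v\mapsto-v$, so its barycenter has zero mean; hence $\E(L_n)=0$ and $\E(B_n)=\E(\tilde R_n)=O(h_n^\beta)$, giving $\E(B_n)^2=O(h_n^{2\beta})$, the third term. For the fluctuation, $\V(B_n)\le\E(B_n^2)\le 2\E(L_n^2)+2\E(\tilde R_n^2)$; since $\E(L_n^2)=O(h_n^2)$ (finite second moment of the zero-cell barycenter) dominates $\E(\tilde R_n^2)=O(h_n^{2\beta})$ when $\beta>1$ and $h_n\to0$, we obtain $\V(B_n)/M=O(h_n^2/M)$, the middle term.

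Combining the three bounds yields \eqref{rate stationary}. The rate statement then follows by balancing the variance $1/(a_n h_n^d)$ against the squared bias $h_n^{2\beta}$, which forces $h_n=O(a_n^{-1/(d+2\beta)})$ and the common value $a_n^{-2\beta/(d+2\beta)}$, while the condition $M>h_n^{2-2\beta}$ precisely guarantees that the middle term $h_n^2/M$ does not exceed $h_n^{2\beta}$. The main obstacle is the cancellation $\E(L_n)=0$: it is exactly what separates the random-forest squared bias $h_n^{2\beta}$ from the pointwise single-tree squared bias $h_n^2$, and it relies on the distributional point-symmetry of the zero cell rather than on stationarity alone; care is also needed to check that the moments $\E(\|\bar v_1\|^2)$ and $\E(\diam(C_0)^{2\beta})$ of the unit zero cell $C_0$ are finite for the tessellations at hand.
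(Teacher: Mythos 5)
Your overall architecture is exactly the paper's: the decomposition \eqref{inequality RF} from Lemma~\ref{lem EQM}, the conditional-variance bound $c\,\E(1/(a_n|A_n(x)|))=c/(a_nh_n^d)$ recycled from the proof of Theorem~\ref{th consistency} together with \eqref{mean area}, the bound $\V(B_n)\leq c\,\E(\diam(A_n(x))^2)=O(h_n^2)$ from the Lipschitz part of $\lambda$, a first-order Taylor expansion with a $(\beta-1)$-H\"older remainder for $\E(B_n)$, and the final balancing of $1/(a_nh_n^d)$ against $h_n^{2\beta}$ with $M>h_n^{2-2\beta}$ absorbing the middle term. All of that is correct and matches the paper.

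The one genuine problem is your justification of the key cancellation $\E(L_n)=0$. You attribute it to a distributional point-symmetry of the zero cell under $v\mapsto-v$, which you say holds ``for the isotropic regular tessellations considered,'' and you explicitly claim the cancellation ``relies on the distributional point-symmetry of the zero cell rather than on stationarity alone.'' Both assertions are off. The proposition assumes only stationarity, regularity and the scaling property --- not isotropy --- and it is meant to cover non-isotropic examples such as the Mondrian process (an axis-aligned STIT tessellation). The correct argument, which the paper takes from Lemma~16 of \cite{oreilly2021}, needs only stationarity: writing $F_n(u)=\E(\1_{A_n(x)}(u)/|A_n(x)|)$, one has $\E(\bar u_n-x)=\int(u-x)F_n(u)\,\der u$, and the function $v\mapsto\E\bigl(\1_{x+v\in C(x)}/|C(x)|\bigr)$ is even because $x+v\in C(x)$ if and only if $x\in C(x+v)$ (with $|C(x)|=|C(x+v)|$ on that event) and a shift by $-v$ restores the original law. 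So the first moment of the $1/|C|$-weighted coverage function vanishes without any reflection symmetry of the zero cell itself. Your version would restrict the result to a strictly smaller class of tessellations and, as stated, leaves the step unproved for the general class in the hypothesis; replace the symmetry claim by the stationarity argument and the proof is complete.
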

As it appears clearly  in \eqref{rate stationary},  a single tree
($M=1$) achieves a sub-optimal rate of convergence in comparison
with a random forest having $M>h_n^{2-2\beta}$ trees. In fact, it
is not difficult to adapt Proposition~3 of  \cite{mourtada2020}
to our setting,  providing an example of intensity $\lambda\in
C^{1,\beta-1}$ for which  the sub-optimal upper bound \eqref{rate
stationary} when $M=1$ is also a lower bound for a single tree.
In contrast, the optimal rate $a_n^{2\beta/(d+2\beta)}$ obtained by a
random forest with $M>h_n^{2-2\beta}$ trees is minimax for
$\lambda\in C^{1,\beta-1}$  \cite[Theorem~6.5]{kutoyants98}.

\section{Proofs}
\label{sec:proof}
\subsection{Proof of Theorem~\ref{th consistency}}
By Jensen's inequality, 
\begin{align}
\E\left[\left(\hat\lambda^{(RF)}(x) - \lambda(x)\right)^2\right] 
&= 
\E\left[\left(\frac 1 M 
\sum_{i=1}^M (\hat\lambda^{(i)}(x) - \lambda(x))\right)^2\right] \nonumber \\
&\leq 
\frac 1 M  \sum_{i=1}^M  \E\left[\left(\hat\lambda^{(i)}(x) - \lambda(x)\right)^2\right], \label{jensen}
\end{align}
so that the mean square consistency of $\hat\lambda^{(RF)}(x)$
boils down to the consistency of each intensity tree estimator.
For $i=1$, we have by  the Pythagorean theorem
\begin{align}\label{biais-var}
        \E&\left[\left(\hat\lambda^{(1)}(x) - \lambda(x)\right)^2\right] \nonumber \\ 
        & = \E\left[\left(\hat\lambda^{(1)}(x) -  
                \E(\hat\lambda^{(1)}(x)|\pi_n^{(1)})\right)^2\right] +  
                \E\left[ \left( \E(\hat\lambda^{(1)}(x)|\pi_n^{(1)}) - 
                \lambda(x)\right)^2\right] \nonumber \\
        &= \E\left[ \V(\hat\lambda^{(1)}(x)|\pi_n^{(1)})\right]  + \E\left( B_n^2\right),
\end{align}
where $B_n=\E\left(\hat\lambda^{(1)}(x)|\pi_n^{(1)}\right) -
\lambda(x)$ is the conditional bias of the first tree. 

For the first term (the variance term), by definition of
$\lambda_n$ and $g_n$, using the fact that the partition $\pi_n^{(1)}$
is independent of $X_n$ by \eqref{hyp pi},
\begin{align*}
        \V(\hat\lambda^{(1)}(x)|\pi_n^{(1)})
        &=\frac{1}{a_n^2|A_n^{(1)}(x)|^2} 
                \E\left( \sum_{u\in X_n} \1_{u\in A_n^{(1)}(x)}^2 + 
                \sum_{u,v\in X_n}^{\neq} \1_{u,v\in A_n^{(1)}(x)}\bigg|\pi_n^{(1)} \right) \\
        & \hspace{0.5cm} -  \E^2\left(\hat\lambda^{(1)}(x)\bigg|\pi_n^{(1)} \right) 
        \\
        &=\frac{1}{a_n^2|A_n^{(1)}(x)|^2}  \int_{A_n^{(1)}(x)} \lambda_n(u)\der u \\ 
        &\hspace{0.5cm} + \frac{1}{a_n^2|A_n^{(1)}(x)|^2}  
        \int_{A^{(1)}_n(x)\times A^{(1)}_n(x)} \lambda_n(u)\lambda_n(v) (g_n(u,v)-1) \der u \der v.
\end{align*}
Since under \eqref{H1} and  \eqref{hyp bounded},
$\lambda_n=a_n\lambda$ where $\lambda$ is bounded, we obtain
using the property \eqref{hyp gn} assumed in \eqref{H1}
that $\V(\hat\lambda^{(1)}(x)|\pi_n^{(1)} )\leq c/(a_n|A^{(1)}_n(x)|)$ for
some $c>0$. Hence the  first term in \eqref{biais-var} tends to 0
by \eqref{hyp an}.

For the second term (the bias term), we have by definition of
$\lambda_n$,
\begin{align}\label{form Bn}
        B_n&= \frac{1}{a_n|A_n^{(1)}(x)|}\int_{A_n^{(1)}(x)} \lambda_n(u)\der u - \lambda(x) 
        =  \frac{1}{|A_n^{(1)}(x)|}\int_{A_n^{(1)}(x)}  ( f(z(u))-f(z(x))) \der u.
\end{align}
Let $\epsilon>0$, then
$$\P(\forall u\in A_n^{(1)}(x),  \left | f(z(u))-f(z(x))\right|<\epsilon)\leq \P(|B_n|<\epsilon)$$
and since $u\in A_n^{(1)}(x) \Leftrightarrow z(u)\in I_n^{(1)}(x)$, this means that 
$$\P(\forall y\in I_n^{(1)}(x),  \left |
f(y)-f(z(x))\right|<\epsilon)\leq \P(|B_n|<\epsilon).$$ By
continuity of $f$ at $z(x)$, as assumed in  \eqref{hyp
bounded}, and since $z(x)\in I_n^{(1)}(x)$, there exists $\eta>0$ such that $\diam(I_n^{(1)}(x))<\eta$
implies  $\left | f(y)-f(z(x))\right|<\epsilon$ for all $y\in
I_n^{(1)}(x)$. Hence 
$$\P(\diam(I_n^{(1)}(x))<\eta)\leq  \P(|B_n|<\epsilon),$$ 
whereby $B_n$
tends to 0 in probability by \eqref{hyp I}. Since $B_n$ is
uniformly bounded thanks to \eqref{hyp bounded}, the sequence
$(B_n^2)$ is uniformly integrable and we deduce that
$\E(B_n^2)\to 0$ showing that the second term in
\eqref{biais-var} tends to 0.

\subsection{Proof of Proposition~\ref{prop rate holder}}

Since the partitions are i.i.d., we deduce  from \eqref{jensen} and \eqref{biais-var}
  that
\begin{equation}\label{eqm tree}
\E\left[\left(\hat\lambda^{(RF)}(x) - \lambda(x)\right)^2\right] 
\leq 
\E\left[ \V(\hat\lambda^{(1)}(x)|\pi_n^{(1)})\right] + \E(B_n^2),
\end{equation}
where $B_n=\E\left(\hat\lambda^{(1)}(x)|\pi_n^{(1)}\right) -
\lambda(x)$. From the proof of Theorem~\ref{th consistency}, we deduce that $ \E\left[
\V(\hat\lambda^{(1)}(x)|\pi_n^{(1)})\right]\leq c\,
\E\left(1/(a_n |A_n(x)|)\right)$. On the other hand, we have
from \eqref{form Bn}
$$\E(B_n^2)= \E\left[ \left(   \frac{1}{|A_n(x)|}\int_{A_n(x)}
(f(z(u))-f(z(x)))\der u \right)^2\right].$$ By the pointwise
$\beta$-H\"older continuity assumption of $f$ at $z(x)$, there
exists a vicinity $V_x$ of $z(x)$ and $L>0$ such that for
all $y\in V_x$,  $|f(y)-f(z(x))|\leq L  \|y-z(x)\|^\beta$. Since
$f$ is bounded, we have for some $c>0$, 
\begin{align*}
        &\frac{1}{|A_n(x)|}\int_{A_n(x)} |f(z(u))-f(z(x))| \der u \\
        &\leq  \frac{1}{|A_n(x)|} \int_{A_n(x)}  
               \1_{z(u)\in V_x}  |f(z(u))-f(z(x))|\der u + 
               \frac{c}{|A_n(x)|}\int_{A_n(x)}  \1_{z(u)\notin V_x}\der u \\
        &\leq \frac{L}{|A_n(x)|} \int_{A_n(x)} \|z(u)-z(x)\|^\beta\der u +  
                \frac{c}{|A_n(x)|}\int_{A_n(x)}  \1_{z(u)\notin V_x}\der u.
\end{align*}
Denote by $\eta_x>0$ the radius of a ball centred at $z(x)$ and
included in $V_x$. If $u\in A_n(x)$, meaning that $z(u)\in
I_n(x)$, and $z(u)\notin V_x$, then $\diam(I_n(x))>\eta_x$.
Therefore
\begin{align*}
\frac{1}{|A_n(x)|}\int_{A_n(x)}  \1_{z(u)\notin V_x} \der u 
&\leq  
\1_{\diam(I_n(x))>\eta_x}.
\end{align*}
By Markov inequality, we obtain that for some $c>0$ depending on $x$ and $\beta$, 
$$\E(B_n^2)\leq c \, \E(\diam(I_n(x))^{2\beta}).$$

\subsection{Proof of Corollary~\ref{cor infill}} 
The rate of
convergence in the first case  (i.e., partitions of $W_n$) is
given by Proposition~\ref{prop rate holder} where $f=\lambda$ is
$\alpha\beta$-H\"older continuous at $x$.  
The deterministic tessellation $\mathcal T_d(x)\cap W_n$ of $W_n$
satisfies $|A_n(x)|=|I_n(x)|=O(h_n^d)$ and
$\diam(I_n(x))=O(h_n)$, whereby the result.

In the second case (i.e., partitions of $z(W_n)$), we deduce from
Proposition~\ref{prop rate holder} that the bias is of order
$\E\left(\diam(I_n(x))^{2\beta}\right)=O(h_n^{2\beta})$ for the
deterministic tessellation  $\mathcal T_p(z(x))\cap z(W_n)$. For
the variance term, note that since $z$ is $\alpha$-H\"older at
$x$, there exists a vicinity $V_x$ of $x$ such that $y\in V_x$
implies $\|z(x)-z(y)\|\leq c_x \|x-y\|^\alpha$ for some $c_x>0$.
We deduce that if  $\|x-y\|<(h_n/(2c_x))^{1/\alpha}$ and if $h_n$
is small enough, then $y\in V_x$ and  $\|z(x)-z(y)\|\leq h_n/2$.
Since  by definition of $\mathcal T_p(z(x))$, $I_n(x)$ is simply
the cube centred at $z(x)$ with side length $h_n$, the latter
implies that $z(y)$ belongs to $I_n(x)$. Hence, for $h_n$ small
enough
$$\|x-y\|<(h_n/(2c_x))^{1/\alpha} \Longrightarrow z(y)\in I_n(x).$$
We deduce that 
\begin{align*}
        |A_n(x)|=\int_{W_n} \1_{z(y)\in I_n(x)} \der y  
        & \geq \int_{W_n} \1_{\|x-y\|<(h_n/(2c_x))^{1/\alpha}} \1_{z(y)\in I_n(x)} \der y \\
        &= \int_{W_n} \1_{\|x-y\|<(h_n/(2c_x))^{1/\alpha}} \der y \\
        &= O\left((h_n/(2c_x))^{d/\alpha}\right),
\end{align*}
and the rate of convergence in the second case follows. 

The optimal rate is obtained when the bias term and the
variance term are of the same order,  which is  achieved by
choosing $h_n=O(a_n^{-1/(2\alpha\beta+d)})$ in the first case and
$h_n=O(a_n^{-1/(2\beta+d/\alpha)})$  in the second case.

\subsection{Proof of Corollary~\ref{cor discrete}}

For partitions of $W_n$, $A_n(x)=I_n(x)$. Moreover, by continuity
of $z$ at $x$ and of $f$ at $z(x)$, and since $z$ is a binary
variable, then $\lambda$ is constant in a vicinity of $x$. For
this reason, for any $\beta>0$,  $\lambda$ is pointwise
$\beta$-H\"older at $x$. This means that \eqref{rate holder} in
Proposition~\ref{prop rate holder} holds true in this case
for $A_n(x)=I_n(x)$ and for any $\beta>0$. If the partitions
are regular tessellations with intensity $h_n^{-d}$ having the
scaling property, then $A_n(x)$ corresponds to the zero-cell and
we have $\E\left(1/(a_n |A_n(x)|)\right)=O(1/(a_n h_n^{d}))$ and
$\E(\diam(A_n(x))^{2\beta})=O(h_n^{2\beta})$, cf
Appendix~\ref{sec:tessellations}. Since by assumption
$h_n<a_n^{-\epsilon}$ for some $\epsilon> 0$, the choice
$2\beta>1/\epsilon-d$ leads to $h_n^{2\beta}<1/(a_n h_n^{d})$ and
the result of Corollary~\ref{cor discrete} follows.

For partitions of $z(W_n)$ corresponding to $\{\{0\}, \{1\}\}$,
we have $A_n(x)=z^{-1}(z(x))\cap W_n$ and so $z(u)=z(x)$ for all
$u\in A_n(x)$. This entails $B_n=0$ (using the same notation as
in the proof of Proposition~\ref{prop rate holder}) and the mean
square error has the same order as $\E\left(1/(a_n
|A_n(x)|)\right)$, which by assumption is of order $1/(a_n
|W_n|)$.

\subsection{Proof of Lemma~\ref{lem EQM}}
By  the Pythagorean theorem,
\begin{multline*}
\E\left[\left(\hat\lambda^{(RF)}(x) - \lambda(x)\right)^2\right] 
 = \E\left[\left(\hat\lambda^{(RF)}(x) -  \E(\hat\lambda^{(RF)}(x)| 
 \pi_n^{(1)},\dots,\pi_n^{(M)})\right)^2\right] 
 \\ +  
 \E\left[ \left( \E(\hat\lambda^{(RF)}(x)|\pi_n^{(1)},\dots,\pi_n^{(M)}) - 
 \lambda(x)\right)^2\right].
\end{multline*}
On the one hand, by Jensen's inequality, and since the
partitions are i.i.d.,
\begin{align*}
        \E&\left[\left(\hat\lambda^{(RF)}(x) -  \E(\hat\lambda^{(RF)}(x)|
        \pi_n^{(1)},\dots,\pi_n^{(M)})\right)^2\right]
        \\
        &\leq \frac 1 M \sum_{i=1}^M \E\left[
                \left(\hat\lambda^{(i)}(x) -  
                \E(\hat\lambda^{(i)}(x)|\pi_n^{(1)},\dots,\pi_n^{(M)})\right)^2
                \right]
                \\
                & = \frac 1 M \sum_{i=1}^M \E\left[\left(\hat\lambda^{(i)}(x) -  
                \E(\hat\lambda^{(i)}(x)|\pi_n^{(i)})\right)^2\right]
                = \E\left[ \V(\hat\lambda^{(1)}(x)|\pi_n^{(1)})\right].
\end{align*}
On the other hand, again by the i.i.d. property of the partitions,
\begin{align*}
        \E&\left[ \left( \E(\hat\lambda^{(RF)}(x)|
        \pi_n^{(1)},\dots,\pi_n^{(M)}) - \lambda(x)\right)^2\right]
        \\
        &= \V\left[ \frac 1 M \sum_{i=1}^M 
        \E\left(\hat\lambda^{(i)}(x)|\pi_n^{(1)},\dots,\pi_n^{(M)}\right)\right] + 
        \left[\E(\hat\lambda^{(RF)}(x)) - \lambda(x)\right]^2
        \\
        &= \frac 1 M  \V\left(\E\left(\hat\lambda^{(1)}(x)|\pi_n^{(1)}\right)\right) + 
        \left[\E(\hat\lambda^{(1)}(x)) - \lambda(x)\right]^2
                = \frac 1 M  \V(B_n) + \E(B_n)^2.
\end{align*}

\subsection{Proof of Proposition~\ref{prop rate cov}}

We start from the upper bound in \eqref{inequality RF}
obtained in Lemma~\ref{lem EQM}.  For the first term, we deduce from the proof of
Theorem~\ref{th consistency} that 
$$\E\left[ \V(\hat\lambda^{(1)}(x)|\pi_n^{(1)})\right] \leq c\,
\E\left(1/(a_n |A_n(x)|)\right).$$ 
For the second term  in
\eqref{inequality RF}, observe that $\lambda$ is a Lipschitz
function so that 
\begin{align*}
        \V(B_n)\leq \E(B_n^2) & = \E\left[ \left( 
        \frac{1}{|A_n(x)|}\int_{A_n(x)} (\lambda(u)-\lambda(x))\der u\right)^2\right]
        \\
        &\leq c\, \E\left[ \left(   \frac{1}{|A_n(x)|}\int_{A_n(x)} \|u-x\|\der u\right)^2\right]
        \\
        &\leq c\, \E\left(\diam(A_n(x))^2\right).
\end{align*}
where 
$c>0$. For the third term, denoting $F_n(u)=\E(\1_{A_n(x)}(u)/|A_n(x)|)$, we have
$$\E(B_n) = \int (\lambda(u)-\lambda(x) - \nabla
 \lambda(x)'(u-x)) F_n(u)\der u +  \int  \nabla \lambda(x)'(u-x)
 F_n(u)\der u,$$ 
 where for $v\in \R^d$, $v'$ stands for the
 transpose of $v$. Since $\lambda\in C^{1,\beta-1}$, we obtain
 by a Taylor expansion that $\|\lambda(u)-\lambda(x) - \nabla
 \lambda(x)'(u-x)\|\leq c\|u-x\|^{\beta}$ for some $c>0$, so that
\begin{align*}
\E(B_n)^2&\leq 2c \left(\int \|u-x\|^{\beta} F_n(u)\der u\right)^2 + 
2 \| \nabla \lambda(x)\|^2 \left\| \int (u-x) F_n(u)\der u\right\|^2
\\
&\leq 2c\,  \E\left(\diam(A_n(x))^\beta\right)^2 + 2 c \left\| 
        \int (u-x) F_n(u)\der u\right\|^2.
\end{align*}

For a stationary tessellation, $\int (u-x) F_n(u)\der u=0$, see the
argument of \cite[Lemma~16]{oreilly2021} that is valid for any
stationary tessellation. For such tessellation with intensity
$h_n^{-d}$, we also have $\E(1/(a_n |A_n(x)|)=1/(a_n h_n^d)$.
Moreover if this tessellation is regular and has the scaling
property, we deduce from Appendix~\ref{sec:tessellations} that
for some $c>0$, $\E\left(\diam(A_n(x))^\beta\right) = c\,
h_n^{\beta}$ and $\E\left(\diam(A_n(x))^2\right) = c\, h_n^2$
(for a different constant $c>0$), leading to the result.

\begin{appendices}
\section{Appendix on random tessellations}\label{sec:tessellations}

  A tessellation of $\R^d$ is a partition of $\R^d$ into non-empty
 compact and convex polytopes. Denoting by $\mathcal K$ the set of
 such polytopes, a tessellation can be viewed as a collection of
 cells belonging to $\mathcal K$.  For basic materials concerning
 random tessellations, we refer the reader to \cite{chiu2013} and
 \cite{schneider2008}. We consider in this section stationary
 tessellations in $\R^d$, see \cite{moller1989} for an overview
 and a study of their basic characteristics. Among them, the
 intensity, the typical cell and the zero cell are of primary
 importance. The intensity  $\gamma$  represents the mean number
 of cells per unit measure. The typical cell $Z_\gamma$ can be
 viewed as a randomly chosen cell among all cells of the
 tessellation. The zero cell $Z_\gamma(0)$ is simply the cell  that
 contains the origin. Note that by stationarity the law of the zero
 cell is the same as the law of the cell containing any given point
 $x\in\R^d$. A formal definition of these characteristics can be
 found in the above references.
 
 For a stationary tessellation with intensity $\gamma$, we have by  \cite[Corollary 5.2]{moller1989}:
 \begin{equation}\label{mean area}
 \E\left(\frac 1 {|Z_\gamma(0)|}\right) = \frac 1 {\E(|Z_\gamma|)} = \gamma.
 \end{equation}

 \begin{definition}[scaling property]
 A stationary random tessellation in $\R^d$ with intensity
 $\gamma$ has the scaling property if its typical cell $Z_\gamma$
 satisfies the equality in distribution $$Z_\gamma \overset{d}{=}
 \gamma^{-1/d} Z_1.$$
 \end{definition}

 As an immediate consequence of the scaling property, if $\varphi:
 \mathcal K\mapsto\R_+$ is a measurable $\alpha$-homogeneous
 function for some $\alpha>0$, i.e.,
 $\varphi(aC)=a^\alpha\varphi(C)$  for any $a>0$ and $C\in\mathcal
 K$, then $\varphi(Z_\gamma)\overset{d}{=} \gamma^{-\alpha/d}
 \varphi(Z_1)$. We then obtain the following result. 
 \begin{prop}\label{prop scaling general} Let $Z_\gamma$ and
 $Z_\gamma(0)$ be the typical cell and the zero cell,
 respectively, of a stationary random tessellation in $\R^d$ with
 intensity $\gamma$ having the scaling property.  Let $\varphi:
 \mathcal K\mapsto\R_+$ be a measurable $\alpha$-homogeneous
 function. Then $\E(\varphi(Z_\gamma))= \gamma^{-\alpha/d}
 \E(\varphi(Z_1))$. If moreover $\varphi$ is invariant by
 translation, then $\E(\varphi(Z_\gamma(0))= \gamma^{-\alpha/d}
 \E(\varphi(Z_1(0))$.
 \end{prop}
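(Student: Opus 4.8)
The plan is to treat the typical-cell identity directly from the scaling property together with the homogeneity of $\varphi$, and then to reduce the zero-cell identity to the typical-cell one by invoking the classical volume-biasing (Palm) relation between the two cells.

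For the typical cell, the scaling property gives $Z_\gamma \overset{d}{=} \gamma^{-1/d} Z_1$, and applying the measurable map $\varphi$ to both sides preserves the equality in distribution. Using $\alpha$-homogeneity, $\varphi(\gamma^{-1/d} Z_1) = \gamma^{-\alpha/d}\varphi(Z_1)$, so that $\varphi(Z_\gamma) \overset{d}{=} \gamma^{-\alpha/d}\varphi(Z_1)$; taking expectations yields $\E(\varphi(Z_\gamma)) = \gamma^{-\alpha/d}\E(\varphi(Z_1))$. This is precisely the observation recorded just before the statement, and no translation invariance is needed here, since the typical cells already carry their canonical centring.

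For the zero cell, I would use the standard fact for stationary tessellations that, for any translation-invariant measurable $h:\mathcal K\to\R_+$,
\begin{equation*}
\E(h(Z_\gamma(0))) = \frac{\E(|Z_\gamma|\, h(Z_\gamma))}{\E(|Z_\gamma|)},
\end{equation*}
i.e.\ the zero cell is the volume-biased typical cell (see \cite{moller1989}). I would apply this with $h=\varphi$, which is translation invariant by assumption, and then recognise $\psi := |\cdot|\,\varphi$ as a $(d+\alpha)$-homogeneous functional, since $|aC|=a^d|C|$. The already-proved typical-cell identity applied to $\psi$ gives $\E(|Z_\gamma|\varphi(Z_\gamma)) = \gamma^{-(d+\alpha)/d}\E(|Z_1|\varphi(Z_1))$, while \eqref{mean area} gives $\E(|Z_\gamma|) = \gamma^{-1}$ and $\E(|Z_1|)=1$. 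Dividing and simplifying the exponent $1-(d+\alpha)/d = -\alpha/d$ yields $\E(\varphi(Z_\gamma(0))) = \gamma^{-\alpha/d}\,\E(|Z_1|\varphi(Z_1)) = \gamma^{-\alpha/d}\,\E(\varphi(Z_1(0)))$, the last equality being the same volume-biasing relation evaluated at $\gamma=1$.

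The routine parts (the distributional identity and the exponent bookkeeping) are immediate; the only point that requires care is the zero-cell reduction. The main obstacle is therefore to justify the volume-biasing formula and, in particular, to make clear why translation invariance of $\varphi$ is exactly the hypothesis needed: the zero cell equals the volume-biased typical cell only after forgetting the position of the origin inside it, so a functional that is not translation invariant would pick up the random location of the origin and the clean identity would break. I would cite \cite{moller1989} for this Palm-type relation rather than re-deriving it.
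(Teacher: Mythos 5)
Your proof is correct and follows essentially the same route as the paper: the typical-cell identity from the scaling property plus homogeneity, and the zero-cell identity from the Palm/volume-biasing relation $\E(\varphi(Z_\gamma(0)))=\gamma\,\E(\varphi(Z_\gamma)\,|Z_\gamma|)$ for translation-invariant $\varphi$, cited from equation (5.2) of \cite{moller1989}. Your bookkeeping via the $(d+\alpha)$-homogeneous functional $|\cdot|\,\varphi$ and the normalisation $\E(|Z_\gamma|)=\gamma^{-1}$ is exactly the computation the paper leaves implicit.
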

 
 \begin{proof}
 The first relation is clear from $\varphi(Z_\gamma)\overset{d}{=}
 \gamma^{-\alpha/d} \varphi(Z_1)$. The second one is a consequence
 of the relation
 $\E\varphi(Z_\gamma(0))=\gamma\E(\varphi(Z_\gamma) |Z_\gamma|)$,
 which is valid for any non-negative measurable
 translation-invariant function $\varphi$, see for instance (5.2)
 in  \cite{moller1989}.
 \end{proof}

 \begin{definition}[regularity]
 A stationary tessellation with intensity $\gamma$ is regular if
 $\E(\diam(Z_\gamma(0))^k)<\infty$ for any $k\geq 0$. 
 \end{definition}

 \begin{cor}
 For a regular tessellation  with intensity $\gamma$ having the
 scaling property, we have, for any $k\geq 0$,
 \begin{equation}\label{scaling diam}
 \E(\diam(Z_\gamma^k(0))  = c_k\gamma^{-k/d},
 \end{equation}
 where $0<c_k<\infty$. 
 \end{cor}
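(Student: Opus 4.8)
The plan is to recognize this corollary as a direct specialization of Proposition~\ref{prop scaling general} to the functional $\varphi(C)=\diam(C)^k$. First I would verify the two hypotheses required by the second part of that proposition. For the homogeneity, observe that scaling a compact set by a factor $a>0$ multiplies every pairwise distance by $a$, so $\diam(aC)=a\,\diam(C)$ and hence $\varphi(aC)=a^k\varphi(C)$; thus $\varphi$ is $\alpha$-homogeneous with $\alpha=k$. For the translation invariance, the diameter of a set is unchanged under any rigid translation, so $\varphi(C+t)=\varphi(C)$ for all $t\in\R^d$; measurability of $\varphi$ on $\mathcal K$ is immediate from that of the diameter functional.

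With these two properties in hand, applying Proposition~\ref{prop scaling general} to the zero cell with $\alpha=k$ yields at once
$$\E(\diam(Z_\gamma(0))^k)=\gamma^{-k/d}\,\E(\diam(Z_1(0))^k).$$
Setting $c_k=\E(\diam(Z_1(0))^k)$ then gives exactly the claimed identity \eqref{scaling diam}.

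It then remains only to check that $0<c_k<\infty$. Finiteness is precisely the regularity assumption evaluated at $\gamma=1$: by definition a regular tessellation satisfies $\E(\diam(Z_1(0))^k)<\infty$ for every $k\geq 0$. Positivity follows because the cells of a tessellation are non-empty compact convex polytopes of positive volume, hence have strictly positive diameter almost surely, so that $\E(\diam(Z_1(0))^k)>0$; for $k=0$ this simply reads $c_0=1$.

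I do not anticipate a genuine obstacle here: the statement is essentially an instance of Proposition~\ref{prop scaling general}, and the only points needing care are the elementary verifications that $\diam(\cdot)^k$ is $k$-homogeneous and translation-invariant, together with invoking the regularity hypothesis to secure finiteness of the constant $c_k$.
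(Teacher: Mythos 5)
Your proof is correct and follows exactly the paper's route: both apply Proposition~\ref{prop scaling general} with $\varphi=\diam^k$ (hence $\alpha=k$) and translation invariance of the diameter, and both obtain finiteness of $c_k=\E(\diam(Z_1(0))^k)$ from the regularity hypothesis. Your additional explicit checks of homogeneity, translation invariance, and positivity of $c_k$ are details the paper leaves implicit.
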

 
 \begin{proof}
 This is an immediate consequence of Proposition~\ref{prop scaling
 general} with $\varphi=\diam^k$ and $\alpha=k$, where
 $c_k:=\E(\diam(Z_1(0))^k)$ is finite from the regularity property.
 \end{proof}
 
 \begin{prop}\label{prop scaling} The scaling and regularity
 properties are verified for a stationary Poisson Vorono\"i
 tessellation, a stationary Poisson Delaunay tessellation, a
 stationary Poisson hyperplane tessellation and a STIT
 tessellation (including the Mondrian process as a particular
 case) in $\R^d$. 
 \end{prop}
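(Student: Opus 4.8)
The plan is to separate the two claims and to reduce regularity to the intensity-one case. Note first that once the scaling property is known, Proposition~\ref{prop scaling general} applied to the translation-invariant, $k$-homogeneous functional $\varphi=\diam^k$ gives $\E(\diam(Z_\gamma(0))^k)=\gamma^{-k/d}\,\E(\diam(Z_1(0))^k)$, so regularity follows as soon as $\E(\diam(Z_1(0))^k)<\infty$ for every $k\ge 0$. I would therefore first establish the scaling identity $Z_\gamma\overset{d}{=}\gamma^{-1/d}Z_1$ for all four models, and then prove finiteness of all diameter moments of the intensity-one zero cell in each case.

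For the three Poisson-based models I would use one common mechanism: each tessellation is an equivariant functional of a scale-covariant driving Poisson process. Let $T_c\colon\R^d\to\R^d$ be the dilation $T_c(x)=cx$ with $c=\gamma^{-1/d}$. A homogeneous Poisson point process $\Phi_1$ of intensity $1$ pushed forward by $T_c$ is again Poisson, with intensity $c^{-d}=\gamma$, since $T_c^{-1}(B)=c^{-1}B$ has volume $\gamma|B|$. The Voronoi (resp. Delaunay) construction commutes with similarities, because nearest-neighbour relations (resp. the empty-circumball condition) are preserved by $T_c$; hence $\mathrm{Vor}(T_c\Phi_1)=T_c\,\mathrm{Vor}(\Phi_1)$, and passing to the typical cell yields $Z_\gamma\overset{d}{=}\gamma^{-1/d}Z_1$. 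For the Poisson hyperplane tessellation the same idea applies, with $T_c$ now acting on the space of hyperplanes: the translation-invariant intensity measure rescales so that $T_c$ maps the intensity-one hyperplane process to the one generating a tessellation of cell-intensity $\gamma$, and since cutting $\R^d$ by hyperplanes commutes with $T_c$ the typical cell again satisfies the scaling identity. The only point requiring care here is the bookkeeping between the intensity of the hyperplane process and the resulting cell intensity $\gamma$, which is what pins down the exponent $-1/d$.

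For regularity I would control the tail of $\diam(Z_1(0))$ in each model. For Poisson Voronoi, the event $\{\diam(Z_1(0))>2r\}$ forces a large uncovered configuration around the origin, of probability decaying like $\exp(-c\,r^d)$, so all polynomial moments are finite; the Delaunay case follows from the analogous super-exponential decay of the circumradius of the Delaunay cell containing the origin. For the Poisson hyperplane tessellation the zero cell is a.s. bounded and $\P(\diam(Z_1(0))>r)$ decays exponentially, again giving all moments. These tail estimates are classical, and I would cite \cite{schneider2008,chiu2013} rather than reprove them.

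The main obstacle is the STIT case, since a STIT tessellation is not the equivariant image of a single Poisson process, so the unified argument does not apply directly. Here I would instead invoke the intrinsic scaling of STIT tessellations: the construction is governed by a time parameter together with a translation-invariant hyperplane measure, and dilating space by $c$ is equivalent to rescaling the time parameter, which rescales the cell intensity; this is precisely the stability-under-iteration/scaling property recorded in the STIT literature and summarised in \cite{schneider2008,chiu2013}. For regularity I would use that the cell of a STIT tessellation containing the origin is dominated by (indeed, for the first division coincides in law with) the zero cell of a Poisson hyperplane tessellation, whose diameter moments are all finite by the previous paragraph, and that subsequent divisions only shrink cells, transferring finiteness of all moments to the STIT zero cell. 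The delicate points, where I expect the bulk of the work, are making the time-rescaling precise enough to fix the exponent $-1/d$ and justifying rigorously the comparison of the STIT zero cell with a Poisson hyperplane zero cell.
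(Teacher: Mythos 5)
Your proposal is correct in substance and, like the paper, ultimately rests on scale-covariance of the driving objects plus classical results from the tessellation literature, but the routes differ in the details. For Poisson Vorono\"i your dilation-equivariance argument is essentially the paper's: the paper makes the ``passing to the typical cell'' step explicit via the Slivnyak representation $Z_\gamma \overset{d}{=} C(0\mid X_\gamma\cup\{0\})$, which is worth doing, since in general transferring a dilation identity from the tessellation to the \emph{typical} cell requires the equivariance of the Palm construction (or of the chosen centre function) under dilations --- standard, but it should be stated rather than absorbed into ``passing to the typical cell''. For Delaunay the paper instead reads the scaling off the explicit distribution of the typical cell in \cite{schneider2008}, and for the Poisson hyperplane tessellation it obtains scaling as a by-product of the STIT case, using that the typical cell of a STIT tessellation has the \emph{same law} as that of a Poisson hyperplane tessellation with the same characteristics (\cite{nagel2003,Schreiber13}); your direct argument via rescaling the hyperplane intensity measure is equally valid once the degree-$d$ homogeneity of the cell intensity in the hyperplane intensity is checked, as you note. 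The one place where your sketch is imprecise is STIT regularity: the zero cell of a STIT tessellation is not merely ``dominated by'' the Poisson hyperplane zero cell, nor does it coincide with it only ``for the first division'' --- the cited identity is equality in distribution of the full (interior of the) cells, so the delicate comparison you anticipate is in fact a known clean identity and no monotonicity argument is needed. Finally, your reduction of regularity to the intensity-one case via $\varphi=\diam^k$ is exactly the content of the paper's corollary following Proposition~\ref{prop scaling general}; the paper's proof of the present proposition simply cites \cite{hug2004,hug2007} for the finiteness of the moments, consistent with your plan to cite rather than reprove the tail bounds.
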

 
 \begin{proof}
 For a stationary Poisson Delaunay tessellation, the distribution
 of $Z_\gamma$ is given in \cite[Theorems~10.4.4]{schneider2008}
 from which we easily deduce the scaling property. For a STIT
 process, this is \cite[Lemma~5]{nagel2005}. Moreover, the
 distribution of the typical cell of a STIT process is similar as
 the typical cell of a stationary Poisson hyperplane tessellation
 with the same characteristics, see \cite[Lemma~3]{nagel2003} and
 \cite[Corollary~1]{Schreiber13}. The scaling property of the
 stationary Poisson hyperplane tessellation thus follows, see also
 \cite[Theorems~10.4.6]{schneider2008} for an explicit expression
 of the distribution of $Z_\gamma$ in the isotropic case. For a
 stationary Poisson Vorono\"i tessellation, $Z_\gamma\overset{d}{=}
 C(0|X_\gamma\cup\{0\})$, that is the Vorono\"i cell with nucleus 0
 in $X_\gamma\cup\{0\}$ where $X_\gamma$ denotes the Poisson point
 process with intensity $\gamma$, 
 see  \cite{mollerLNS94}. Since $X_\gamma  \overset{d}{=}
 \gamma^{-1/d} X_1$,  $C(0|X_\gamma\cup\{0\}) \overset{d}{=}
 C(0 | \gamma^{-1/d} X_1 \cup\{0\})$, the latter cell being exactly
 $\gamma^{-1/d} C(0 | X_1\cup\{0\})$ by definition of a Vorono\"i
 cell. So  $Z_\gamma \overset{d}{=}  \gamma^{-1/d} Z_1$. Finally,
 the fact that these tessellations are regular can for instance
 been deduced from \cite{hug2004, hug2007}.
 \end{proof}

  \section{Pair correlation function and asymptotic regimes}\label{sec:pcf}
  
While $\lambda$ encodes the first moment of a spatial point process, the pair correlation function encodes its second order properties, see  \cite{baddeley15book,Coeurjolly2019understanding}. Let us first recall that  the second order intensity $\lambda^{(2)}$  of the process, when it exists, is the function that  satisfies 
for any Borel sets $B_1,B_2 \subset \R^d$,
\begin{equation*}
\E \sum_{u,v\in X}^\neq \1_{u \in B_1, v\in B_2} = \int_{B_1\times B_2} \lambda^{(2)}(u,v) \der u\der v.
\end{equation*}
If there is no interaction, as for a Poisson point process, $\lambda^{(2)}(u,v)=\lambda(u)\lambda(v)$. In turn, the pair correlation function (pcf) is defined for any $u,v\in \R^d$ by 
$$g(u,v)=\frac{\lambda^{(2)}(u,v)}{\lambda(u)\lambda(v)},$$
provided  $\lambda(u)\lambda(v)\neq 0$, otherwise $g(u,v)=0$. 

Coming back to the setting of Section~\ref{sec:asympt}, we consider a sequence of point processes  $X_n$, each with intensity $\lambda_n=a_n\lambda$ and pair correlation $g_n$. Lemma~\ref{lem gn} below provides useful conditions under which assumption  \eqref{hyp gn} is satisfied. We show in the following that they are satisfied for a wide class of spatial point process models. 

  \begin{lem}\label{lem gn} Let $g$ be a pair correlation function
on $\R^d\times \R^d$. Assume that $g_n(u,v)=g(a_n u, a_n v)$ or
that $g_n(u,v)-1=(g(u,v)-1)/a_n$, then \eqref{hyp gn} is
satisfied whenever $\sup_v \int_{\R^d} |g(u,v)-1|\der u<\infty$.
\end{lem}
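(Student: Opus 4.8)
The plan is to treat the two assumed forms of $g_n$ separately, in each case reducing the left-hand side of \eqref{hyp gn} to an integral of $|g(\cdot,\cdot)-1|$ that is controlled, via Tonelli's theorem, by the single uniform quantity $C:=\sup_v\int_{\R^d}|g(u,v)-1|\der u<\infty$. The analytic content of the lemma lies entirely in this one estimate; the rest is a change of variables and bookkeeping of the powers of $a_n$.

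First I would dispose of the additive case $g_n(u,v)-1=(g(u,v)-1)/a_n$, which is the easy one. Here the factor $a_n$ appearing in \eqref{hyp gn} cancels exactly, so that $a_n\int_{A^2}|g_n(u,v)-1|\der u\der v=\int_{A^2}|g(u,v)-1|\der u\der v$. I would then apply Tonelli's theorem (the integrand being nonnegative), integrating first in $u$ over $A\subset\R^d$ and using $\int_A|g(u,v)-1|\der u\leq\int_{\R^d}|g(u,v)-1|\der u\leq C$ uniformly in $v$, and finally integrating the constant $C$ over $A$ to obtain $C|A|$. Thus \eqref{hyp gn} holds with $c=C$.

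For the multiplicative case $g_n(u,v)=g(a_nu,a_nv)$, I would perform the change of variables $s=a_nu$, $t=a_nv$, which contributes a Jacobian factor $a_n^{-2d}$ and maps $A$ to $a_nA$ with $|a_nA|=a_n^d|A|$. This rewrites the left-hand side of \eqref{hyp gn} as $a_n^{1-2d}\int_{(a_nA)^2}|g(s,t)-1|\der s\der t$. Applying Tonelli and the same uniform estimate as above, the inner integral over $a_nA\subset\R^d$ is at most $C$, so the double integral is bounded by $C|a_nA|=C\,a_n^d|A|$, leaving $a_n^{1-2d}\cdot C\,a_n^d|A|=C\,a_n^{1-d}|A|$.

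The one point requiring care, and the main thing to get right, is the residual factor $a_n^{1-d}$. Since $d\geq 1$ the exponent $1-d\leq 0$, so $a_n^{1-d}$ stays bounded as soon as $a_n$ is bounded away from zero; this holds in all the regimes considered here (indeed $a_n\geq 1$ may be assumed after the normalisation discussed in Section~\ref{sec:asympt}), yielding \eqref{hyp gn} with $c=C\,\sup_n a_n^{1-d}<\infty$. In the borderline dimension $d=1$ the factor is identically $1$ and the bound $C|A|$ matches the additive case exactly, while for $d\geq 2$ it is only smaller, so no genuine difficulty arises beyond tracking these exponents correctly.
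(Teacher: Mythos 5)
Your proof is correct and follows essentially the same route as the paper's: a change of variables to transfer the scaling onto the integration domain, followed by Tonelli and the uniform bound $\sup_v\int_{\R^d}|g(u,v)-1|\der u$. One point of comparison worth noting: in the multiplicative case the paper substitutes only in $u$ and writes the first step as an equality $a_n\int_{A^2}|g_n(u,v)-1|\der u\der v=\int_A\int_{a_nA}|g(u,a_nv)-1|\der u\der v$, which silently drops the Jacobian factor $a_n^{1-d}$ that you track explicitly (the displayed equality is exact only for $d=1$). Your version is therefore the more careful one, and it makes visible the mild extra requirement that $a_n$ be bounded away from zero (e.g.\ $a_n\geq 1$), which is not stated in the lemma but is implicit in all the asymptotic regimes of Section~\ref{sec:asympt}; under that convention both arguments yield the same bound $c=C$ up to the harmless factor $\sup_n a_n^{1-d}\leq 1$.
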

\begin{proof}
In the first case
\begin{align*}
a_n\int_{A^2} |g_n(u,v)-1|\der u\der v 
&= 
\int_A \int_{a_n A} |g(u,a_n v)-1|\der u\der v \\
&\leq   \int_A \int_{\R^d} |g(u,a_n v)-1|\der u\der v \\
&\leq |A| \sup_v \int_{\R^d} |g(u,v)-1|\der u.
\end{align*}
In the second case
\begin{align*}
a_n\int_{A^2} |g_n(u,v)-1|\der u\der v 
&=  
\int_{A^2} |g(u,v)-1|\der u\der v \leq |A| \sup_v \int_{\R^d} |g(u,v)-1|\der u.
\end{align*}

\end{proof}

\medskip

\noindent {\it Example 1}: If $X_n$ is an inhomogeneous Poisson
point process with intensity $\lambda_n=a_n \lambda$, then
$g_n(u,v)=1$ and \eqref{hyp gn} is obviously satisfied. 

 \medskip

\noindent {\it Example 2}:  Let $X_n$ be a Neyman-Scott process,
defined by $X_n=\bigcup_{c\in C_n} Y_c$ where $C_n$ is a
homogeneous Poisson point process (of cluster centres) and given
$C_n$,  $Y_c$ are independent inhomogeneous Poisson point
processes (of offsprings' clusters). In a first scenario, similar
to Example 1 in \cite{Choiruddin21}, assume that the intensity of
$C_n$ is $a_n$ and that the intensity of $Y_c$ is
$k(.-c)\lambda(.)$, where $k$ is a symmetric density on $\R^d$.
Then it is easily derived that $\lambda_n=a_n\lambda$ and
$g_n(u,v)=1+k\star k(v-u)/a_n$, where $\star$ denotes
convolution. In this first scenario, there are more and more
clusters as $a_n\to\infty$, but each of them keeps the same
characteristics in terms of mean number of offsprings and spread.
Note that in this case $g_n(u,v)-1=(g(u,v)-1)/a_n$ where
$g(u,v)=1+k\star k(v-u)$. In a second scenario, assume that the
intensity of $C_n$ is $a_n$  and that the intensity of $Y_c$ is
$a_n k(a_n(.-c))\lambda(.)$. Here, as $a_n\to\infty$, there are
more and more clusters and each cluster is smaller and smaller.
We have  in this case $\lambda_n=a_n\lambda$ and
$g_n(u,v)=g(a_n u, a_n v)$. In both scenarios, Lemma~\ref{lem gn}
applies, where the main condition holds true whenever $k$ is
compactly supported or fast decaying.

 \medskip

\noindent {\it Example 3}: Let $X_n$ be a  LGCP (log Gaussian Cox
process, see, e.g., \cite{baddeley15book}) driven by a Gaussian random field with mean
$\mu_n(.)=\log(a_n)+\mu(.)$, for some function $\mu$, and with
covariance function $c_n(u,v)=c(a_n (v-u))$ where $c$ is a
positive definite function.  Then $\lambda_n=a_n\lambda$
where $\lambda(.)=\exp(\mu(.)+c(0)/2)$ and $g_n(u,v)=g(a_n u,a_n
v)$ where $g(u,v)=\exp(c(v-u))$. Lemma~\ref{lem gn} applies and
the condition therein is satisfied if $c(u)\to 0$ as
$|u|\to\infty$ and $\int |c(u)|\der u<\infty$.

 \medskip

\noindent {\it Example 4}:  Let $X$ be a hardcore point process
with intensity $\lambda(.)$ and hardcore radius $R>0$, assuming
its existence. $X$ can for instance correspond to  an
inhomogeneous Mat\'ern hardcore model of type-I or type-II,
 see, e.g., \cite{baddeley15book}. 
  Let $X_n=a_n X$. This process has intensity
  $\lambda_n=a_n\lambda$, hardcore radius $R_n=R/a_n$, and
  pcf $g_n(u,v)=g(a_n u,a_n v)$, where $g$ is the pcf of $X$.
  Lemma~\ref{lem gn} applies and the condition on $g$ is for
  instance satisfied for the aforementioned hardcore Mat\'ern
  models.

 \medskip

\noindent {\it Example 5}: In the same spirit as in the previous
example, consider a DPP $X$ on $\R^d$ with kernel
$K(u,v)=\sqrt{\lambda(u)\lambda(v)}K_0(u,v)$ where $K_0(u,u)=1$,
assuming its existence (see \cite{Lavancier}). Then $X_n=a_n X$
is a DPP with  intensity  $\lambda_n=a_n\lambda$ and with
pcf $g_n(u,v)=g(a_n u,a_n v)$, where $g(u,v)=1-|K_0(u,v)|^2$ is the
pcf of $X$.  Lemma~\ref{lem gn} applies  again and the condition
on $g$ is satisfied if  $\sup_v \int_{\R^d}|K_0(u,v)|^2
\der u<\infty$, which holds true for most standard DPP kernels used
in spatial statistics.

\end{appendices}

\section{Acknowledgments}
The authors thank Nicolas Chenavier from Universit\'e du Littoral
C\^ote d'Opale for fruitful discussions and comments on random
tessellations.

\bibliographystyle{acm}
\bibliography{../bib_RF}
	
\end{document}